\documentclass[11pt]{article}

\usepackage[margin=1in]{geometry}
\usepackage[utf8]{inputenc}
\usepackage{microtype}
\usepackage{amsmath,amssymb,amsthm,mathtools}
\usepackage{xcolor}
\usepackage[colorlinks=true,linkcolor=blue,citecolor=blue,urlcolor=blue]{hyperref}
\usepackage{booktabs}
\usepackage{enumitem}

\newtheorem{theorem}{Theorem}[section]
\newtheorem{lemma}[theorem]{Lemma}
\newtheorem{proposition}[theorem]{Proposition}
\newtheorem{corollary}[theorem]{Corollary}
\newtheorem{claim}[theorem]{Claim}
\newtheorem{definition}[theorem]{Definition}

\newtheorem{openproblem}[theorem]{Open Problem}
\newtheorem*{theorem*}{Theorem}

\usepackage[nameinlink,noabbrev]{cleveref}

\crefname{theorem}{Theorem}{Theorems}
\Crefname{theorem}{Theorem}{Theorems}
\crefname{lemma}{Lemma}{Lemmas}
\Crefname{lemma}{Lemma}{Lemmas}
\Crefname{claim}{Claim}{Claims}
\crefname{proposition}{Proposition}{Propositions}
\Crefname{proposition}{Proposition}{Propositions}
\crefname{corollary}{Corollary}{Corollaries}
\Crefname{corollary}{Corollary}{Corollaries}
\crefname{definition}{Definition}{Definitions}
\Crefname{definition}{Definition}{Definitions}
\crefname{remark}{Remark}{Remarks}
\Crefname{remark}{Remark}{Remarks}
\crefname{conjecture}{Conjecture}{Conjectures}
\Crefname{conjecture}{Conjecture}{Conjectures}
\crefname{openproblem}{Open Problem}{Open Problems}
\Crefname{openproblem}{Open Problem}{Open Problems}

\newcommand{\E}{\mathop{\mathbb E}}

\newcommand{\eps}{\varepsilon}
\newcommand{\F}{\mathbb F}
\newcommand{\Z}{\mathbb Z}
\newcommand{\ML}{\operatorname{ML}}

\newcommand{\ms}{\operatorname{MS}}

\title{Linear Hashing is Not That Awesome}
\author{Or Zamir \\ Tel Aviv University}
\date{}

\begin{document}
\maketitle

\begin{abstract}
Consider the canonical universal hash family
\(
        h(x)= ((ax+b)\bmod p)\bmod m,
\)
where~$a,b$ are chosen uniformly from~$\mathbb Z_p$, which we call linear hashing,
being used to hash~$n$ elements into~$m=\Theta(n)$ buckets.
For any universal family,  the expected size of the largest bucket is at least~$\Omega(\log n / \log\log n)$ and at most~$O(\sqrt{n})$.
The only improvement upon these trivial bounds for linear hashing is a 2019 upper bound of~$\tilde{O}(n^{1/3})$ by Knudsen.
We show that for any~$p$ sufficiently larger than~$n$, there is a set of~$n$ keys whose expected maximum load is
\( n^{\Omega(1/\log\log n)},
\)
proving linear hashing does not have a polylogarithmic maximum load.
We extend the same bounds to the classical multiply-shift hash family of Dietzfelbinger, Hagerup, Katajainen, and Penttonen.  

Our main contribution is an equivalence between the maximum load problem to a density variant of arithmetic Kakeya sets. 
We then complete the lower bound using a construction of Green and Ruzsa of a small set containing long arithmetic progressions with every difference in a prescribed range.
Surprisingly, our equivalence also implies that any substantial improvement over Knudsen's upper bound would imply new results about standard arithmetic Kakeya sets. 
More precisely, an \(O(n^{1/3-\varepsilon})\) upper bound would improve known bounds for unions of complete integer arithmetic progressions, while an \(n^{o(1)}\) upper bound would imply Bourgain's arithmetic-progression criterion.
\end{abstract}

\begingroup
\renewcommand\thefootnote{}
\footnotetext{Independent concurrent work of~\cite{bakshi2026lowerboundslinearhashing} obtains the same super-polylogarithmic lower bound for linear hashing. The two works differ in their surrounding frameworks and additional results.}
\endgroup

\newpage
\tableofcontents
\newpage

\section{Introduction}

Consider the textbook hash family, denoted \emph{linear hashing}, that maps a key $x$ from a universe~$[p]$ to
\[
        h_{a,b}(x)=((ax+b)\bmod p)\bmod m,
\]
where $p$ is prime and $a,b$ are uniform in $\F_p$.  
This is the canonical example of a universal family in the sense of Carter and Wegman~\cite{CarterWegman1979}. 
Assume we map~$n$ distinct keys to~$m=\Theta(n)$ buckets using~$h_{a,b}$.
While universality guarantees that the expected number of elements colliding with each specific element is constant, the expected size of the largest bucket is far less understood.

For a truly random function, this expected \emph{maximum load} is $\Theta(\log n/\log\log n)$.  
The same order is unavoidable for any hash family when~$p$ is sufficiently larger than~$n$: We can draw~$n$ elements to insert independently, and observe the above lower bound holds for~$n$ i.i.d. placements from \emph{any} distribution.
For any \emph{universal} family, the expected maximum load is at most~$O(\sqrt{n})$, as the expected number of element pairs with colliding hashes is~$O(n^2 \cdot \frac{1}{m})=O(n)$ due to universality and linearity of expectation, as well as the number of pair collisions being at least quadratic in the maximum load.

The only known improvement over these straightforward bounds came from Knudsen~\cite{Knudsen2019}, who proved that for linear hashing the expected maximum load is at most~$\tilde{O}\left(n^{1/3}\right)$.
Until this work, no improvements over the naive lower bound were known.
Closing this gap has been repeatedly posed as a significant open problem~\cite{Knudsen2019, Westover2024, dhar2024linear}, with several papers posing the question of whether the upper bound can be improved to match the simple lower bound. 

In this paper we give the first non-trivial lower bound, showing that the expected maximum load for linear hashing is at least super-polylogarithmic.  More precisely, we prove the following.

\begin{theorem*}[Main lower bound, informal]
For every sufficiently large $n$ and~$m=\Theta(n)$, and every prime~$p\geq  n^{1+1/\log\log n}$, there is a set of $n$ keys in $\Z_p$ for which the expected maximum
load of linear hashing is at least
\[
        \exp\left(
          \left(\frac{\log 2}{3}-o(1)\right)
          \frac{\log n}{\log\log n}
        \right)=n^{\Omega\left(\frac{1}{\log\log n}\right)}.
\]
\end{theorem*}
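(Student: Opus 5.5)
We reduce the maximum-load question to a covering statement about arithmetic progressions and then plug in a Green--Ruzsa type construction.

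\textbf{Reduction.} Fix $m$ and think of a bucket as a residue interval: $h_{a,b}(x)=j$ iff $(ax+b)\bmod p$ lies in $I_j=\{j, j+m, j+2m,\dots\}$. This is awkward, so we first group buckets. For a window $W=[c,c+L)$ of length $L$ with $L\le m$, the values $(ax+b)\bmod p\in W$ all land in distinct buckets, so instead we look at the other direction. Take the key set $S\subseteq\Z_p$ to be a set of integers in $[0,N)$. Suppose that for a given $a$, $S$ contains an arithmetic progression $P=\{x_0+id : 0\le i<k\}$ with $ad\equiv m\cdot t \pmod p$ for a small integer $t$. Then $aP+b$ is an arithmetic progression in $\Z_p$ with common difference $mt$. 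For a constant fraction of shifts $b$ it does not wrap around, and then all its points are congruent modulo $m$, giving load $\ge k$. So it suffices to find $S$ with $|S|=n$ such that for a constant fraction of $a\in\F_p$ there is some $d$ with $|d|\le D$ and $ad\equiv mt$ for some $|t|\le T$, where $S$ contains a $k$-term AP of difference $d$. The wrap-around condition is $kmT\le p/2$.

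The set of $a$ such that some $d\le D$ has $ad\bmod p\in m\cdot[-T,T]$ has measure about $\min(1, DT/p)$ up to coprimality losses. The divisor-sum structure makes a positive fraction achievable when $DT\gtrsim p$. So we need:
\begin{itemize}
\item a set $S\subset\Z$ of size $n$ containing a $k$-term AP of every difference $d\in[1,D]$, and
\item parameters with $DT\approx p$ and $kmT\le p$, that is, $D\gtrsim km\approx kn$.
\end{itemize}

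\textbf{Construction.} This is exactly the Green--Ruzsa problem of a small set containing long APs with all differences in a range. Their construction, based on a base-$q$ digit/multi-scale product over $\approx\log n/\log\log n$ primes, gives a set of size about $D$ times a subpolynomial factor. That set contains $k$-term APs for all $d\le D$ with $k=\exp((\tfrac{\log 2}{3}-o(1))\log n/\log\log n)$. We take $D\approx n\cdot k$ but need $|S|\le n$. Here the density variant matters. It suffices that for a positive fraction of $d\in[1,D]$ (weighted appropriately) a $k$-AP exists, since the measure count over $a$ only needs a positive fraction of valid $(d,t)$ pairs. This relaxed version is what the Green--Ruzsa sets achieve with $|S|\approx D/k^{c}$. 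Balancing the exponents yields the stated constant $\log 2/3$. The requirement $p\ge n^{1+1/\log\log n}$ ensures $D$ and the universe fit below $p$ and that a non-wrapping progression exists.

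\textbf{Expected load.} Finally, average over $a,b$. With probability $\Omega(1)$ over $a$ a good $(d,t)$ exists. Given such $a$, with probability $\Omega(1)$ over $b$ there is no wraparound. This gives expected maximum load $\Omega(k)$.

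\textbf{Main obstacle.} The main obstacle is the counting step: showing that the multiples $\{d\cdot a\}$ over the differences actually covered by $S$ hit $m[-T,T]$ for a positive fraction of $a$. This needs a second-moment argument. Bound $\E_a[\#\text{hits}^2]$ using the number of solutions $d t' \equiv d' t$, controlled by a divisor bound. This is where the Kakeya-density formulation is really needed, and where losses of $n^{o(1)}$ must be kept below $k$.
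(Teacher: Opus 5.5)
Your reduction is correct and is the paper's. The multiplier $a=mtd^{-1}$ maps a $k$-term progression of difference $d$ to one of difference $mt$, which lies in a single residue class modulo $m$ unless it wraps. The paper avoids averaging over $b$: a progression of span below $p$ crosses a multiple of $p$ at most once, so some bucket receives $k/2$ of its points for every $b$.

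The step you flag as the main obstacle is immediate and needs no second moment. For coprime $(d,t)\in[1,D]\times[1,T]$ with $DT<p$, the multipliers $mtd^{-1}$ are pairwise distinct. Indeed, $td'\equiv t'd\pmod p$ with both sides below $p$ is an integer equality, so $t/d=t'/d'$, and since both fractions are reduced, $(d,t)=(d',t')$. At least $DT/3$ of these pairs are coprime. This is worth getting right: a worst-case divisor-bound loss is itself of size $n^{\Theta(1/\log\log n)}$ and would swamp $k$.

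The genuine gap is the construction. It is the only source of the constant $\tfrac{\log 2}{3}$, and you assert it rather than derive it. You do not need a ``relaxed'' Green--Ruzsa set with progressions for only a positive fraction of differences. The density relaxation enters only through the $d\mapsto d/t$ amplification already built into your reduction, and the unmodified construction works. What is missing is its size bound and the parameter count.

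Let $q_1<\dots<q_t$ be the first $t$ odd primes, $Q_t=\prod_i q_i$, and $\rho_t=\prod_i(1+1/q_i)$. Take $x_d\in[1,Q_t]$ with $x_d\equiv d^2\pmod{Q_t}$, and set $S=\bigcup_{1\le d<Q_t}\{x_d+jd:0\le j<K\}\subseteq[1,KQ_t]$. The $j$-th layer reduces modulo $Q_t$ into the image of $d\mapsto d^2+jd$. By completing the square modulo each odd $q_i$ and applying the Chinese remainder theorem, this image has $\prod_i\frac{q_i+1}{2}=Q_t2^{-t}\rho_t$ residues, each with $K$ representatives in $[1,KQ_t]$. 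Hence $|S|\le K^2Q_t2^{-t}\rho_t$.

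Now take $D=\Theta(mK)$ and $t$ minimal with $Q_t>D$. Then $Q_t\le Dq_t$, $t=(1+o(1))\log n/\log\log n$, and $|S|\lesssim mK^3q_t\rho_t2^{-t}$ with $q_t\rho_t=e^{o(t)}$. This is at most $n$ once $K\le 2^{(1/3-o(1))t}=\exp((\tfrac{\log 2}{3}-o(1))\log n/\log\log n)$. The exponent $1/3$ comes from $K^2$ (the layers) times $K$ (from $D\approx mK$). Without this computation, your value of $k$ and your claim $|S|\approx D/k^{c}$ are read off from the target rather than proved.
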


In particular, we rule out linear hashing achieving optimal expected maximum load.
We generalize our lower bound also to the multiply-shift hash family of Dietzfelbinger, Hagerup, Katajainen, and Penttonen~\cite{DietzfelbingerEtAl1997}, defined as
\[
h_a(x) = (ax \bmod 2^w)>>(w-\ell),
\]
where~$a$ is uniform odd number in~$\Z_{2^w}$ and~$y>>i=\lfloor\frac{y}{2^i}\rfloor$ denotes the shift-right operation.
This is a variation of the linear hashing family which is optimized for computer implementation, as it replaces arbitrary divisions with binary operations that are significantly quicker on most computer architectures. 
This simplicity has made multiply-shift a standard practical choice for integer hashing  \cite{Thorup2015HighSpeedHashing,RichterAlvarezDittrich2015,LuanChang2022}.  
Knudsen's upper bound also extends to this family~\cite{Knudsen2019}.

\subsection{Connection to Kakeya sets}

Our main contribution is an equivalence between the maximum load of linear hashing and a new density version of arithmetic Kakeya.  
Informally, we seek a small set~$S$ in~$\Z_p$ such that, for many differences~$d$, the set~$S$ contains \emph{many points} from some short arithmetic progression of difference~$d$.  
Unlike in the usual arithmetic Kakeya problem, we do not require the entire progression to be present: the selected points may form an arbitrary subset, which may depend on~$d$. In particular, they may even not contain any actual arithmetic progression at all.  
We refer to such sets as \emph{density-Kakeya sets}.

This rich-subset relaxation is reminiscent of finite-field Furstenberg sets, where one works over a vector space and asks for large intersections with affine subspaces rather than requiring the entire subspaces to be present
\cite{DharDvirLund2021,dhar2024linear}.
Our setting, however, differs both in its ambient algebraic structure and in the bounded arithmetic patterns with respect to which richness is measured.

Beyond our headline lower bound, the equivalence is also consequential in the upper-bound direction. 
As formalized in \Cref{sec:kakeya-comparison}, any polynomial improvement over Knudsen's cube-root exponent would yield new lower bounds for unions of complete arithmetic progressions over the integers. 
At the qualitative endpoint, an \(n^{o(1)}\) maximum-load upper bound (for all~$p>n$) would imply Bourgain's arithmetic-progression criterion for the Euclidean Kakeya conjecture.
In particular, our equivalence implies that any improvement over the currently known upper bounds for the maximum-load in linear hashing would have direct consequences in arithmetic combinatorics. 

Our lower bound uses a construction of Green and Ruzsa~\cite{GreenRuzsa2019}, who constructed a small set containing a complete arithmetic progression of a prescribed length for every difference in a long interval.  Their construction is therefore a genuine arithmetic Kakeya construction, rather than merely a dense one, and we use their set without modification.

Nevertheless, the density relaxation already plays a meaningful role in our argument.  If a set contains an~$L$-term progression of difference~$d$, then, it also contains every~$r$-th point of an~$rL$-term progression of difference~$d/r$.  Thus every difference supplied by the Green--Ruzsa construction can be duplicated to many of its divisors.  These additional differences generally support only a fraction of the corresponding progression and would not be counted by the ordinary Kakeya formulation, but they remain valid in the density variant and are thus still usable for our hashing application.

We remark that dense arithmetic Kakeya is seemingly a substantially weaker requirement than arithmetic Kakeya.  The subsets associated with different differences need not themselves contain long arithmetic progressions, and their density may tend to zero.  This leaves room for improving our lower bound without improving the best-known genuine arithmetic Kakeya constructions.  We currently do not know a substantial separation between the two notions in the arithmetic setting.
We show that polynomial such separations do occur over vector spaces.  
This suggests that the dense arithmetic problem may similarly admit substantially better constructions, potentially leading to a polynomial lower bound for linear hashing.

\subsection{Alternative hash families}
The question of expected maximum load was also studied for other, related, hash families.
In particular, hash families that are \emph{truly linear}, usually with input and output sets realized as vector spaces of sizes~$p^k$ and~$p^\ell$ for some~$k>\ell$ and prime~$p$, are far better understood.
Despite the confusing terminology, the family we study in this paper---which is usually called ``linear hashing''---is in fact neither linear nor affine: as~$m<p$ is inherently co-prime to the prime~$p$, there is no non-trivial linear function mapping any module of size~$p$ to any module of size~$m$. In our case then, the final $\bmod\;m$ operation breaks linearity.

Truly linear hash families are very attractive mathematically, as linearity makes their analysis significantly cleaner. On the other hand, these variants are far less practical, due to multiplications over non-integers being slower to implement: if the chosen vector space is of high dimension over a small field, then the description of each linear function is a large dense matrix and applying it requires matrix-vector multiplication; otherwise the field is a large prime power, and then scalar multiplication requires extension-field arithmetic.

One endpoint of this range is realizing an input space of size which is a power of two as a vector space~$\F_2^k$. 
Then, the family of linear maps in~$\F_2^k\rightarrow \F_2^\ell$ corresponds to~$\ell$-by-$k$ binary matrices. We can thus define the family
\[
h_A(x) = Ax,
\]
for a uniformly chosen matrix~$A$ from~$\F_2^{\ell\times k}$.
For this family, Alon, Dietzfelbinger, Miltersen, Petrank, and Tardos~\cite{AlonEtAl1999} proved that the expected max load is~$O(\ell \log \ell)$. 
Babka~\cite{Babka2018} subsequently observed that their argument can be optimized to give~$O(\ell)$.
Very recently, it was finally improved by Jaber, Kumar, and Zuckerman~\cite{JaberKumarZuckerman2025} to the optimal~$O\left(\frac{\ell}{\log \ell}\right)$ matching the maximum load order of a fully random function.

At the other endpoint of this range, for a large finite field~$\F_q$ we may consider the family of linear maps in~$\F_q^2\rightarrow \F_q$.
Alon et al.~\cite{AlonEtAl1999} prove that for this family, there is an input set~$S\subseteq \F_q^2$ of size~$|S|=\Theta(q)$ for which the expected maximum load is at least~$\Omega(q^{1/3})$, and even~$\Omega(\sqrt{q})$ if~$q$ is a perfect square.

The same input and output cardinalities can therefore support linear map families with dramatically different behavior.  For example, setting
$q=2^r$, we have
\[
    |\F_q^2|=|\F_2^{2r}|=2^{2r}
    \qquad\text{and}\qquad
    |\F_q|=|\F_2^r|=2^r.
\]
Nevertheless, the family of $\F_q$-linear maps $\F_q^2\rightarrow\F_q$ has polynomial worst-case maximum load, whereas the family of all $\F_2$-linear maps $\F_2^{2r}\rightarrow\F_2^r$ has optimal maximum load.  Algebraically, the former is a highly structured proper subfamily of the latter: every $\F_q$-linear map is also~$\F_2$-linear, but not conversely.  
The multiply-shift family is defined on input and output sets of these same cardinalities, namely
\[
    [2^{2r}]\longrightarrow[2^r],
\]
but respects neither of these linear structures.

These two examples may have provided contradictory intuition as to which of the two known bounds for linear hashing and multiply-shift hashing should be closer to the correct answer.

In a complementary high-load regime, in which the average bucket contains many keys, Dhar and Dvir~\cite{dhar2024linear} studied the same families of genuinely linear maps and obtained a stronger, two-sided guarantee on the load of every bucket, also via studying connections to Kakeya-type problems. 

\paragraph{Organization.}
\Cref{sec:overview} gives an overview of the proof.
\Cref{sec:reduction} introduces dense arithmetic Kakeya sets and proves the
reduction to hashing.  In \Cref{sec:linear-lower-bound} we state the
Green--Ruzsa construction as a black box and prove the lower bound for
linear hashing using it.  \Cref{sec:multiply-shift} gives the
multiply-shift lower bound.  We end with discussion and open problems in \Cref{sec:discussion}.
For completeness, Appendix \Cref{app:green-ruzsa} repeats the short Green--Ruzsa
construction.  We do this because we need to analyze several features of their construction which they do not explicitly spell out.

\paragraph{Acknowledgments within and beyond the realms of men.}
All mathematical content originated from the biologically-generated author and discussions with colleagues, but interaction with an AI model was extremely useful in revealing an oversight, which is next detailed. 
The equivalence with the density variant of arithmetic Kakeya was known to the author for a while, and was previously discussed with several colleagues.
In particular, in earlier discussions with Terence Tao and Cosmin Pohoata, Tao brought up the paper of Green and Ruzsa~\cite{GreenRuzsa2019}, and we explored whether their arguments could be adapted to improve the upper bound for the relaxed Kakeya problem.
After the author later gave this context to an AI tool, it made an important observation: the very same Green--Ruzsa paper also contains a construction, which with essentially no modification can be plugged into the aforementioned equivalence to yield an improved lower bound, rather than the upper bound we pursued. 
The author thanks Tao and Pohoata for the aforementioned discussions, and William Kuszmaul for separate, older, discussions on potential upper bounds for the problem.

\section{Overview}
\label{sec:overview}

The paper is organized around a density relaxation of arithmetic Kakeya. 
Informally, instead of requiring a set~$X\subseteq\F_p$ to contain a complete arithmetic progression of a given difference~$d$, we require it only to contain many points from some short progression of difference~$d$. 
For \(X\subseteq\F_p\), a length parameter~\(L\), and a nonzero difference~\(d\in\F_p^*\), define the richness
\[
        R_{X,L}(d)
        :=
        \max\left\{
          |J|:
          c+dJ\subseteq X
          \text{ for some }c\in\F_p
          \text{ and }J\subseteq[L]
        \right\}.
\]
Thus, an \((L,K,\delta)\) dense arithmetic Kakeya set is an~\(X\) for which \(R_{X,L}(d)\ge K\) for at least~\(\delta p\) differences. The sets of indices~\(J\) may be completely arbitrary; in particular, they need not be intervals or contain even a three-term arithmetic progression.

In \Cref{sec:reduction}, we show that this is precisely the combinatorial object underlying linear hashing. Up to constant factors, the expected maximum load of a set of keys is equal to its average richness over all differences. One direction associates every rich difference with a multiplier producing a heavy bucket. Conversely, the preimage of every bucket under every multiplier is a subset of a short arithmetic progression of the corresponding difference. Consequently, the maximum-load problem for linear hashing is equivalent to bounding the density arithmetic Kakeya parameter.

This equivalence also explains the difficulty of improving the known upper bound. Complete integer progressions are a special case of our density witnesses, and the best known general bounds for them already give only the cube-root exponent. Knudsen's upper bound may therefore be viewed as extending this cube-root bound to arbitrary subsets of bounded cyclic progressions. 
As discussed in \Cref{sec:kakeya-comparison}, any polynomial improvement over this upper bound would improve known bounds already for complete integer progressions, while a subpolynomial upper bound would imply the upper-Minkowski Kakeya conjecture.
On the other hand, an improved lower bound requires only a better construction for the weaker density problem. We give some evidence that this distinction can be substantial by exhibiting a polynomial separation between the two variants over finite vector spaces.

For our lower bound, we use a construction of Green and Ruzsa~\cite{GreenRuzsa2019} as a black box. In the parameter range needed here, their construction gives an integer set~$S$ of size at most~$n$ which contains a complete progression of length
\[
        n^{\Omega(1/\log\log n)}
\]
for every difference in a sufficiently long initial interval. Thus, the Green--Ruzsa set is an ordinary arithmetic Kakeya construction, and we use it without modification.

The density relaxation nevertheless plays a crucial role. If~$S$ contains the progression
\[
        c+d\{0,1,\ldots,K-1\},
\]
then, for every invertible~$r$, the same points can also be written as
\[
        c+(d/r)\{0,r,2r,\ldots,(K-1)r\}.
\]
They are therefore a \(K\)-point subset of a longer progression whose difference is the quotient~$d/r$. In this way, every difference supplied by the Green--Ruzsa construction generates many additional differences. We use this to amplify the original interval of differences to a constant fraction of all differences in~$\F_p$. 
Plugging the resulting dense arithmetic Kakeya set into our equivalence proves the lower bound for linear hashing in \Cref{sec:linear-lower-bound}.

Finally, \Cref{sec:multiply-shift} applies the same Green--Ruzsa set and the same difference-amplification idea to multiply-shift hashing. The only change is geometric: linear hashing identifies residue classes modulo~$m$, whereas multiply-shift buckets are consecutive intervals determined by the high bits. In both cases, the amplified progression is short enough that at least a constant fraction of its selected points fall into one bucket. This gives the same \(n^{\Omega(1/\log\log n)}\) lower bound for both hash families.

\section{From Linear Hashing to Density Arithmetic Kakeya}
\label{sec:reduction}

\subsection{The hashing model}

For an integer $z$ and a positive integer $m$, let
$\langle z\rangle_m\in[m]$ denote the standard representative
of $z$ modulo $m$.  
Let $p$ be prime and let $1\le m<p$.  
The linear hashing family consists of the functions
\[
        h_{a,b}(x) = h^{p,m}_{a,b}(x)
        =\langle ax+b\rangle_p\bmod m,
        \qquad a,b\in\F_p.
\]
For a set $X\subseteq\F_p$, denote the maximum load by
\[
        \ML(h,X)
        :=\max_{r\in [m]}
          \big|\{x\in X\;|\;h(x)=r\}\big|.
\]
For $n\le p$, we are interested in the maximum over all input sets~$X$ of the expected maximum load
\[
        \max_{\substack{X\subseteq\F_p\\ |X|=n}}\;\;
          \E_{a,b}\big[\ML(h^{p,m}_{a,b},X)\big].
\]

\subsection{Definition of density-Kakeya}

The following arithmetic object, which is a relaxation of arithmetic Kakeya sets, naturally comes up in our analysis of linear hashing.

\begin{definition}[Density arithmetic Kakeya set]
\label{def:dense-kakeya}
Let $X\subseteq\F_p$, let $1\le K\le L<p$, and let $\Delta\subseteq\F_p^*$.  We say that $X$ is \emph{$(L,K)$-rich in the differences $\Delta$} if, for every $d\in\Delta$, there are $c_d\in\F_p$ and $J_d\subseteq[L]$ with $|J_d|\ge K$ such that
\[
        c_d+dJ_d\subseteq X.
\]
If $|\Delta|\ge\delta p$, we call $X$ an \emph{$(L,K,\delta)$ dense arithmetic Kakeya set}.
\end{definition}

When $K=L$ and $J_d=[L]$ for every~$d\in \Delta$, this is the standard problem of a set containing a complete $K$-arithmetic-progression for every difference.  
The freedom to choose arbitrary $J_d$ is our density relaxation. 
Nonetheless, this means that a true Kakeya set is in particular also a density-Kakeya set.

We note that $K/L$ can be vanishing.
We also remark that the bounded index interval (that is~$J_d\subseteq[L]$) is essential for our application.

\subsection{Reduction from linear hashing to density-Kakeya}

\begin{theorem}[Density-Kakeya implies high maximum load]
\label{thm:kakeya-reduction}
Let $p$ be prime, let $n\le p$, let $X\subseteq\F_p$ have size at most $n$,
and suppose that $X$ is $(L,K)$-rich in a set of differences $\Delta\subseteq\F_p^*$. 
If
\(
        mL\le p,
\)
then,
\[
        \E_{a,b}\big[\ML(h^{p,m}_{a,b},X)\big]
        \ge \frac{K|\Delta|}{2p}.
\]
\end{theorem}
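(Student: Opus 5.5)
The plan is to exhibit, for each rich difference $d\in\Delta$, one multiplier $a_d$ such that \emph{every} shift $b$ produces a bucket of size at least $K/2$. Since the $a_d$ will be distinct, $a$ lands in $\{a_d:d\in\Delta\}$ with probability $|\Delta|/p$, which gives the claimed bound.

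\textbf{Choice of multiplier.} Fix $d\in\Delta$, with witness $c_d+dJ_d\subseteq X$, where $J_d\subseteq[L]$ and $|J_d|\ge K$. Since $1\le m<p$, $m$ is invertible in $\F_p$. Set $a_d:=m\,d^{-1}\in\F_p^*$. The map $d\mapsto a_d$ is injective, so distinct differences give distinct multipliers. For $x=c_d+dj$ with $j\in J_d$, we get
\[
a_dx+b=(a_dc_d+b)+mj \pmod p.
\]
Let $y_0:=\langle a_dc_d+b\rangle_p$. Then $\langle a_dx+b\rangle_p=\langle y_0+mj\rangle_p$.

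\textbf{The wraparound step.} This is the only point needing care, and it is where the hypothesis $mL\le p$ enters. For $j\in[L]$, the integers $y_0+mj$ lie in $[y_0,\,y_0+m(L-1)]$. This interval has length less than $mL\le p$ and starts in $[0,p)$, so it meets $[0,p)$ and $[p,2p)$ only. Hence each $\langle y_0+mj\rangle_p$ equals either $y_0+mj$ (no wrap) or $y_0+mj-p$ (one wrap).
\begin{itemize}
\item All $j$ with no wrap give hash value $y_0\bmod m$.
\item All $j$ with one wrap give hash value $(y_0-p)\bmod m$.
\end{itemize}
So the at least $K$ points of $c_d+dJ_d$ fall into at most two buckets. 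By pigeonhole, some bucket receives at least $K/2$ of them, and $\ML(h_{a_d,b},X)\ge K/2$ for every $b\in\F_p$.

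\textbf{Averaging.} Since the maximum load is nonnegative,
\[
\E_{a,b}\big[\ML(h_{a,b},X)\big]\ \ge\ \Pr_a\big[a\in\{a_d:d\in\Delta\}\big]\cdot\frac K2=\frac{|\Delta|}{p}\cdot\frac K2,
\]
which is the claimed bound. The hypothesis $|X|\le n\le p$ plays no role beyond well-definedness.
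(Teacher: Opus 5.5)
Your proposal is correct and follows essentially the same argument as the paper. The shared steps are the multiplier $a_d = m d^{-1}$, the observation that $mL\le p$ allows at most one wraparound so the witness points fall into at most two buckets for every shift $b$, and averaging over the $|\Delta|$ distinct multipliers.
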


\begin{proof}
Fix $d\in\Delta$, and write each given dense arithmetic progression subset as
\[
        c_d+dJ_d\subseteq X
        \qquad \text{with} \qquad
        \qquad J_d\subseteq[L]
        \quad \text{and} \quad
        |J_d|\ge K.
\]
For $a_d=md^{-1}$ ($\bmod\; p$) and any $b\in\F_p$, the corresponding keys are sent,
before the final reduction modulo $m$, to
\[
        \langle a_dc_d+b+mj\rangle_p,
        \qquad j\in J_d.
\]
Let $A=\langle a_dc_d+b\rangle_p \in[0,p-1]$.  
Prior to the final~$\bmod\; m$, the values $A+mj$ for $j\in[L]$ lie in an interval of length less than $mL\le p$.  
This interval crosses at most one multiple of $p$. 
Consequently, $J_d$ splits into at most two parts, with respect to that (at most) singular crossing.
On the part before the crossing, every value is congruent to $A$ modulo $m$.  
On the part after the crossing, every value is congruent to $(A-p)$ modulo $m$.  
One of these two parts has size at least $K/2$, implying the load claim.

The map $d\rightarrow md^{-1}$ is injective on $\F_p^*$.  
Thus there are $|\Delta|$ distinct multipliers with maximum load at least $K/2$, for any choice of the additive shift.  
Averaging over $a,b\in\F_p$ proves the stated bound.  
\end{proof}

\begin{corollary}
\label{cor:constant-density-reduction}
If $X$ is an $(L,K,\delta)$ dense arithmetic Kakeya set of size at most $n$ and $mL\le p$, then
\[
        \E_{a,b}\big[\ML(h^{p,m}_{a,b},X)\big]
        \ge \delta K/2.
\]
\end{corollary}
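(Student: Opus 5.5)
This follows directly from \Cref{thm:kakeya-reduction}; the only work is checking that the hypotheses match and translating $|\Delta|$ into $\delta$. Since $X$ is an $(L,K,\delta)$ dense arithmetic Kakeya set, \Cref{def:dense-kakeya} supplies a set of differences $\Delta\subseteq\F_p^*$ with $|\Delta|\ge\delta p$. For every $d\in\Delta$ there are $c_d\in\F_p$ and $J_d\subseteq[L]$ with $|J_d|\ge K$ such that $c_d+dJ_d\subseteq X$. In other words, $X$ is $(L,K)$-rich in the differences $\Delta$.

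Next I check the remaining hypotheses of \Cref{thm:kakeya-reduction}. The corollary assumes $|X|\le n$ and $mL\le p$. The theorem also asks for $n\le p$; if this is not already implicit in the setup, we may replace $n$ by $|X|\le p$, since $n$ enters only through the size bound on $X$. Invoking the theorem then gives
\[
        \E_{a,b}\big[\ML(h^{p,m}_{a,b},X)\big]\ge \frac{K|\Delta|}{2p}\ge \frac{K\cdot\delta p}{2p}=\frac{\delta K}{2}.
\]

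There is no real obstacle here. The only point needing care is that the density condition is phrased as $|\Delta|\ge\delta p$ rather than as a fraction of $p-1$ nonzero differences, and this phrasing is exactly what makes the factor $p$ cancel cleanly.
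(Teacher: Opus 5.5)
Your proposal is correct and is the same argument the paper intends: the corollary is stated without a separate proof as an immediate consequence of \Cref{thm:kakeya-reduction}, obtained by substituting $|\Delta|\ge\delta p$ into the bound $K|\Delta|/(2p)$. Your concern about the hypothesis $n\le p$ is harmless, since $X\subseteq\F_p$ already gives $|X|\le p$.
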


The following crucial observation shows we can benefit from the density relaxation even if the starting set contains complete progressions.
Intuitively, we notice that a length~$k$ arithmetic progression with difference~$d$, is automatically also a subset of the same size~$k$ of a length~$rk$ arithmetic progression with difference $dr^{-1}$ for any~$r$. 
Thus, each arithmetic progression gives many rich differences corresponding to divisors of the original difference.

\begin{lemma}[Difference amplification]
\label{lem:difference-amplification}
Let $D,R,K$ be positive integers with $DR<p$.  
Suppose that $X\subseteq\F_p$ contains a $K$-term arithmetic progression of every integer difference $1\le d\le D$.  
Then $X$ is $(RK,K)$-rich in a set $\Delta\subseteq\F_p^*$ of size $\Omega(DR)$.
\end{lemma}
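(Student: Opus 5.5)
We take $\Delta=\{d r^{-1}\bmod p : 1\le d\le D,\ 1\le r\le R\}\subseteq\F_p^*$. Since $DR<p$, both $d$ and $r$ are nonzero and less than $p$, so every element $dr^{-1}$ is a well-defined nonzero element of $\F_p$.

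\emph{Richness.} Fix a pair $(d,r)$, and let $c+d\{0,1,\dots,K-1\}\subseteq X$ be the progression supplied by the hypothesis. For $e=dr^{-1}$ we have $d=er$, so
\[
        c+d\{0,\dots,K-1\}=c+e\{0,r,2r,\dots,(K-1)r\}.
\]
The index set $J=\{0,r,\dots,(K-1)r\}$ has $K$ elements and lies in $[(K-1)r+1]\subseteq[RK]$. Hence $X$ is $(RK,K)$-rich in the difference $e$, and therefore in every element of $\Delta$.

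\emph{Counting.} It remains to show that $|\Delta|=\Omega(DR)$. Suppose $dr^{-1}=d'r'^{-1}$ in $\F_p$. Then $dr'\equiv d'r\pmod p$. Both products lie in $[1,DR]$ and $DR<p$, so the congruence is an equality of integers: $dr'=d'r$. Thus distinct elements of $\Delta$ correspond exactly to distinct rational numbers $d/r$, and $|\Delta|$ equals the number of distinct fractions $d/r$ with $1\le d\le D$ and $1\le r\le R$. Each such fraction has a unique reduced representative $d/r$ with $\gcd(d,r)=1$ and $d,r$ in the same ranges. So $|\Delta|$ is at least the number of coprime pairs in $[1,D]\times[1,R]$.

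\emph{Main obstacle.} The difficulty is bounding the number of coprime pairs from below by $cDR$ uniformly, including when $D$ or $R$ is small. We argue in two steps.

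\begin{itemize}
\item Fix $d$. Among the integers $r\in[1,R]$, the number divisible by a prime $q$ is at most $R/q$. A union bound over primes $q\mid d$ does not directly give a constant fraction, so we instead sum over $d$ and use Möbius inversion. The count of coprime pairs is
\[
\sum_{k\ge1}\mu(k)\lfloor D/k\rfloor\lfloor R/k\rfloor \ge DR\Bigl(1-\sum_{k\ge2}k^{-2}\Bigr)-O(\text{error}).
\]
Since $1-\sum_{k\ge2}k^{-2}=2-\pi^2/6>0.35$, the main term is a positive constant times $DR$.
\item The error term is controlled more cleanly by a direct bound: the number of pairs sharing a common prime factor $q$ is at most $\lfloor D/q\rfloor\lfloor R/q\rfloor\le DR/q^2$. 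Summing over primes gives at most $DR\sum_q q^{-2}<0.46\,DR$. This bound holds with no error terms, since floors only decrease the counts.
\end{itemize}

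Consequently at least $0.54\,DR$ pairs are coprime, and $|\Delta|\ge 0.54\,DR=\Omega(DR)$.
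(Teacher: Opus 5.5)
Your proof is correct and is essentially the paper's argument: the same rewriting $c+d[K]=c+(dr^{-1})\bigl(r[K]\bigr)$ with $r[K]\subseteq[RK]$, the same use of $DR<p$ to lift $dr'\equiv d'r\pmod p$ to an integer identity so that distinct differences correspond to distinct reduced fractions, and the same union bound over common divisors (you sum only over primes and get $0.54\,DR$, while the paper sums over all $k\ge 2$ and gets $DR/3$). Your M\"obius-inversion bullet is unnecessary and should be deleted, since the prime union bound in your second bullet already completes the proof without error terms; as written it is also muddled, because the constant $2-\pi^2/6$ displayed there is the union-bound constant rather than the M\"obius main term $6/\pi^2$, and its error term is never quantified.
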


\begin{proof}
For every pair $1\le d\le D$ and $1\le r\le R$ with $\gcd(d,r)=1$, let
\[
        e_{d,r}=dr^{-1}\pmod p.
\]
If $c_d+d\cdot [K]\subseteq X$, then
\[
        c_d+e_{d,r}\bigl(r[K]\bigr)
        =c_d+d[K]\subseteq X.
\]
Since $r[K]\subseteq[RK]$, this certifies richness in the difference $e_{d,r}$.

We next show that these differences are distinct.
Indeed, $e_{d,r}$ and $e_{d',r'}$ are equal if and only if
\[
        dr'\equiv d'r\pmod p.
\]
Both products are at most $DR<p$, so this is simply equality over the integers.
Hence, taken over the rationals~$\mathbb Q$,
\[
\frac{d}{r} = \frac{d'}{r'}.
\]
Furthermore, the two fractions are reduced, and hence $(d,r)=(d',r')$.  

It is thus left to count the number of pairs~$(d,r)\in [D+1]\times [R+1]$ that are co-prime.
A non-co-prime pair has some common divisor $k\ge2$. 
Hence, by a union bound, the number of such pairs is at most
\[
\sum_{k=2}^{\infty}
\left\lfloor\frac{D}{k}\right\rfloor
\left\lfloor\frac{R}{k}\right\rfloor
\le
DR\sum_{k=2}^{\infty}\frac{1}{k^2}
<
\frac{2}{3} DR.
\]
Thus at least $DR/3=\Omega(DR)$ pairs are coprime.
This gives the claimed $\Omega(DR)$ differences.
\end{proof}

We remark that for simplicity Lemma~\ref{lem:difference-amplification} is phrased with respect to an original set~$X$ that contains \emph{true} arithmetic progressions for \emph{all} differences in some interval. 
On the other hand, both assumptions could be relaxed:
If for every difference the set only contains a rich subset of the arithmetic progression, the proof essentially carries without modification. 
On the other hand, having rich differences for only a subset of the differences may invalidate the co-prime pairs counting argument. The given argument still carries through to arbitrary yet \emph{large enough} subsets of the differences range~$[D]$.

\subsection{Reduction from density-Kakeya to linear hashing}

The preceding reduction has a converse when we fix the naturally arising arithmetic progression length
\[
        L=\left\lceil\frac{p}{m}\right\rceil.
\]
For \(X\subseteq\F_p\) and \(d\in\F_p^*\), define the richness
\[
        R_{X,L}(d)
        :=
        \max\left\{
          |J|\; \big| \;
          c+dJ\subseteq X
          \text{ for some }c\in\F_p
          \text{ and }J\subseteq[L]
        \right\}.
\]
Thus \(X\) is \((L,K)\)-rich in difference \(d\) precisely when \(R_{X,L}(d)\ge K\).

\begin{theorem}[Equivalence of linear hashing maximum load and density-Kakeya]
\label{thm:kakeya-equivalence}
Let \(2\le m<p\), let \(L=\lceil p/m\rceil\), and let \(X\subseteq\F_p\) have size \(n\). 
For every \(a\in\F_p^*\) and \(b\in\F_p\), writing
\[
        d_a=ma^{-1}\pmod p,
\]
we have
\[
        \frac{1}{2}R_{X,L}(d_a)
        \le
        \ML(h^{p,m}_{a,b},X)
        \le
        R_{X,L}(d_a).
\]
Consequently,
\[
        \frac{n}{p}
        +\frac{1}{2p}\sum_{d\in\F_p^*}R_{X,L}(d)
        \le
        \E_{a,b}\left[\ML(h^{p,m}_{a,b},X)\right]
        \le
        \frac{n}{p}
        +\frac{1}{p}\sum_{d\in\F_p^*}R_{X,L}(d).
\]
\end{theorem}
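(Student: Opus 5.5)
We prove the pointwise two-sided bound first, for fixed \(a\in\F_p^*\) and \(b\in\F_p\), and then average over \(a,b\).

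\textbf{Lower bound.} This is essentially the proof of \Cref{thm:kakeya-reduction}. Take \(c\) and \(J\subseteq[L]\) achieving \(R_{X,L}(d_a)\), so that \(c+d_aJ\subseteq X\). Since \(a d_a=m\) in \(\F_p\), the key \(c+d_aj\) maps before the final reduction to \(\langle ac+b+mj\rangle_p\). The values \(A+mj\), for \(A=\langle ac+b\rangle_p\) and \(j\in[L]\), lie in \([A,A+m(L-1)]\). This interval has length \(m(L-1)<p\), because \(L-1<p/m\), so it crosses at most one multiple of \(p\). Thus \(J\) splits into two parts whose images are congruent to \(A\) and to \(A-p\) modulo \(m\), respectively. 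One of the parts has size at least \(|J|/2\), so \(\ML(h_{a,b},X)\ge\frac12 R_{X,L}(d_a)\). Note that \(mL\le p\) may fail here, so we use \(m(L-1)<p\) directly.

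\textbf{Upper bound.} Fix a bucket \(r\in[m]\) and let \(Y=\{x\in X: h_{a,b}(x)=r\}\). For \(x\in Y\) put \(y=\langle ax+b\rangle_p\). Then \(y\in\{r,r+m,\dots\}\cap[0,p-1]\), so \(y=r+mj\) with \(0\le j\le \lfloor (p-1-r)/m\rfloor\le\lceil p/m\rceil-1\); hence \(j\in[L]\). Inverting, \(x=a^{-1}(r-b)+a^{-1}mj=c+d_aj\) with \(c=a^{-1}(r-b)\). So \(Y=c+d_aJ\) for some \(J\subseteq[L]\), and \(|Y|\le R_{X,L}(d_a)\). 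Taking the maximum over \(r\) gives the upper bound.

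\textbf{Averaging.} Consider first \(a=0\). Then every key lands in the single bucket \(\langle b\rangle_p \bmod m\), so \(\ML=n\). This case has probability \(1/p\) and contributes exactly \(n/p\) to the expectation. Next, for \(a\) uniform in \(\F_p^*\), the map \(a\mapsto d_a=ma^{-1}\) is a bijection of \(\F_p^*\), since \(m<p\) is invertible. Hence
\[
\E_{a,b}[\ML]=\frac{n}{p}+\frac{1}{p}\sum_{a\ne0}\E_b[\ML(h_{a,b},X)],
\]
and substituting the pointwise bounds gives both inequalities with \(\sum_{d\in\F_p^*}R_{X,L}(d)\). The main care point is the index range in the upper bound: we must check that the bucket preimage uses indices only in \([L]\) with \(L=\lceil p/m\rceil\), which is exactly why this value of \(L\) is chosen.
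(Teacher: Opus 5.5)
Your proof is correct and follows the paper's argument. The upper bound writes each bucket preimage as $a^{-1}(r-b)+d_aI_r$ with $I_r\subseteq[L]$. The lower bound is the two-part split from the proof of \Cref{thm:kakeya-reduction}, and the averaging uses the $a=0$ term contributing exactly $n/p$ together with the bijection $a\mapsto d_a$. Your remark that one must use $m(L-1)<p$ rather than $mL\le p$ is right: the latter always fails for $L=\lceil p/m\rceil$, since $m\nmid p$. That is exactly the detail needed to justify the paper's appeal to that earlier proof.
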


\begin{proof}
Fix \(a\ne0\) and \(b\). For a bucket \(r\in[m]\), let
\[
        I_r=\{j\in\Z_{\ge0}:r+mj<p\}.
\]
Since \(|I_r|\le\lceil p/m\rceil=L\), we have \(I_r\subseteq[L]\). Moreover,
\[
        \{x\in\F_p:h^{p,m}_{a,b}(x)=r\}
        =
        a^{-1}(r-b)+d_a I_r.
\]
Intersecting this identity with \(X\), the keys ending in bucket \(r\) form a subset of a cyclic translate of \(d_a[L]\). 
The load of every bucket is therefore at most \(R_{X,L}(d_a)\), proving the upper bound.

The other direction follows from the proof of \Cref{thm:kakeya-reduction}.
\end{proof}

Thus, up to an absolute factor of two and the negligible contribution of \(a=0\), the expected maximum load is exactly the average directional richness.
We also remark for completeness that in both directions we may consider bounds only concerning a single, uniform, density parameter~$K$:
If at least~$\delta$ fraction of differences have richness at least~$K$, then we get a lower bound of~$\Omega(\delta K)$ for the maximum load.
On the other hand, if the maximum load is~$M$, and hence the average richness is~$\Omega(M)$, then by a standard pigeonholing over dyadic intervals we conclude there exist some~$K,\delta$ such that at least~$\delta$ fraction of differences have richness at least~$K$ and~$\delta K = \Omega(M/\log n)$.

\subsection{Comparison and implications on arithmetic Kakeya bounds}
\label{sec:kakeya-comparison}

In this section, we compare this equivalence with what is known for ordinary arithmetic progressions. 
Suppose that we remove both relaxations we made when moving from arithmetic Kakeya to our density variant: the witnesses again must be complete \(K\)-term progressions, and they lie in the integers rather than in \(\F_p\). 
If an \(n\)-element integer set contains \(K\)-term progressions with \(D\) distinct differences, the state-of-the-art union bound of Gilboa and Pinchasi~\cite[Proposition~4.1]{GilboaPinchasi2014} gives, for every fixed \(\eta>0\),
\[
        n\ge c_\eta D^{1/2-\eta}K.
\]
With the scale of parameters relevant for hashing, \(p=\Theta(nL)=\Omega(nK)\), where the last inequality follows as we also have \(K\le L\). Thus, the above bound implies 
\[
        \delta =\frac{D}{p}
        \lessapprox
        \min\left\{
          1,
          \frac{n}{K^{3}}
        \right\}.
\]
Hence, if~$K\geq n^{1/3}$ we have~$\delta K \lessapprox n/K^2 \leq n^{1/3}$, but also otherwise we have~$\delta K \leq 1\cdot n^{1/3}$.
Strikingly, therefore, even after requiring complete progressions over the non-modular integers, the best presently known general bounds still lead to the same cube-root exponent as the best known upper bound for linear hashing.

Knudsen's argument may now be viewed as an extension of this cube-root bound to both relaxations relevant to hashing. 
By our equivalence, his theorem can be interpreted as showing that the full-progression cube-root phenomenon survives when complete integer progressions are replaced by arbitrary subsets of short cyclic progressions.
Notably, every actual arithmetic Kakeya set is trivially also a density-Kakeya one.
Consequently, a significantly better linear hashing upper bound \(O(n^{1/3-\eps})\) would therefore have independent arithmetic-Kakeya consequences in standard settings. 

To be precise, following Green and Ruzsa~\cite{GreenRuzsa2019}, let \(F_K(D)\) denote the minimum cardinality of a set \(A\subseteq\mathbb Z\) containing a \(K\)-term arithmetic progression with common difference \(d\) for every \(1\le d\le D\). Also define the uniform worst-case maximum load
\[
\mathcal H(n):=\sup_{\substack{p>n\ \mathrm{prime}\\ X\subseteq\F_p,\ |X|\le n}}\E_{a,b}\big[\ML(h^{p,n}_{a,b},X)\big].
\]

\begin{proposition}[Hashing upper bounds imply arithmetic Kakeya bounds]
\label{prop:hashing-implies-kakeya}
For all \(K,D\ge2\), writing \(n=F_K(D)\), we have
\[
\mathcal H(n)\ge\frac14\min\left\{K,\frac{D}{n}\right\}.
\]
Consequently, if \(\mathcal H(n)\le Cn^\alpha\) for all \(n\), where \(\alpha>0\), then
\[
F_K(D)\ge\min\left\{\left(\frac{K}{4C}\right)^{1/\alpha},\left(\frac{D}{4C}\right)^{1/(1+\alpha)}\right\}.
\]
\end{proposition}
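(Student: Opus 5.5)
Let $A\subseteq\Z$ be an extremal set with $|A|=n=F_K(D)$, containing a $K$-term progression of every difference $1\le d\le D$. The plan is to transfer $A$ into $\F_p$ for a suitable prime $p$, amplify its differences with \Cref{lem:difference-amplification}, and conclude with \Cref{thm:kakeya-reduction} at $m=n$. By translation we may assume $A\subseteq[0,N)$ for some finite $N$.

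Choose a prime $p>\max\{2N,\,2DR\}$, where $R$ is specified below. Reducing $A$ modulo $p$ is then injective, so $X=A\bmod p$ has $|X|=n$. Moreover, each integer progression $c_d+d[K]\subseteq A$ maps to a $K$-term progression of difference $d$ in $\F_p$. Thus the hypothesis of \Cref{lem:difference-amplification} holds with these $D$ and $K$.

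The key is the choice of $R$. \Cref{thm:kakeya-reduction} requires $mRK\le p$, i.e. $nRK\le p$. The lemma then gives $|\Delta|\ge DR/3$, so the expected maximum load is at least
\[
\frac{K|\Delta|}{2p}\ge \frac{KDR}{6p}.
\]
If we take $p$ close to $nRK$, this is roughly $D/(6n)$, which is the $D/n$ term in the claim.

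The complication is that $|\Delta|\le p$ always, so the bound can never exceed about $K/2$. This cap is the source of the $\min\{K,\cdot\}$. In the regime $D\gtrsim n$ we choose $R$ so that $DR\approx p/4$, i.e. a constant fraction of all differences is rich. This requires $DR<p\le$ something like $2nRK$, and such an $R$ exists only when $D<2nK$.

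If instead $D\ge nK$, we first shrink $D$ to $D'=\lfloor nK\rfloor$ (or smaller). This is allowed because $A$ still works for $D'$, and only $F_K(D')\le n$ is needed, not equality. With $R=1$ we pick $p$ as a prime in $(nK,2nK]$ by Bertrand, and since $p>2N$ is needed too we may also need $R$ large. That last requirement is the real obstacle: $N$ (the diameter of $A$) may be huge compared with $nRK$.

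We handle the diameter by not requiring $p>2N$. Instead we choose $p$ to avoid the finitely many differences of elements of $A$ and of the products $dr'-d'r$ in the amplification. Almost all primes in a dyadic range avoid a fixed finite set of nonzero integers, but a dyadic range near $nRK$ may contain few primes. So the cleaner fix is to rescale $A$ or compress it: apply a Freiman-type isomorphism of order $2$ into an interval of length $O_K(\cdot)$.

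Preferably, we note that taking $R$ large (so that $p\approx nRK$ is large) makes $p>2N$ automatic, since $N$ is fixed once $A$ is fixed. Then the lemma's distinctness argument needs only $DR<p$, and we pick $R$ with $nRK\le p<2nRK$ and $D\le p/R$. Bertrand then gives a prime in $[nRK,2nRK)$. The resulting load is
\[
\min\Bigl\{\frac{K\min(DR,p)}{6p},\ \cdot\Bigr\}\ \ge\ \frac{1}{12}\min\Bigl\{K,\frac{D}{n}\Bigr\}.
\]
Careful bookkeeping of the constant $1/3$ from coprime counting together with the factor $2$ from Bertrand should tighten this to $1/4$; alternatively, compute the density $6/\pi^2$ of coprime pairs asymptotically in $R$.

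The "Consequently" part is immediate. Given $\tfrac14\min\{K,D/n\}\le Cn^\alpha$, either $K\le 4Cn^\alpha$ or $D\le 4Cn^{1+\alpha}$. Solving each inequality for $n$ gives the stated minimum.
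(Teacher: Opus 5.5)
Your final route is valid, but it proves only $\mathcal H(n)\ge\frac1{12}\min\{K,D/n\}$, not the stated $\frac14$. That route is: take $R$ so large that $nRK$ exceeds the diameter of $A$, pick a Bertrand prime $p\in[nRK,2nRK)$, truncate $D$ so that $DR<p$, amplify via \Cref{lem:difference-amplification}, and apply \Cref{thm:kakeya-reduction} with $m=n$, $L=RK$. Consequently you get the second assertion only with $12C$ in place of $4C$. The promised tightening to $\frac14$ does not follow from the bookkeeping you describe. In the branch $D\le nK$, your set $\Delta$ has exactly $c\,DR$ elements, where $c$ is the fraction of coprime pairs $(d,r)$ with $1\le d\le D$, $1\le r\le R$. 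Here $c<1$, since $(2,2)$ is not coprime. With $p$ only known to lie below $2nRK$, the lower bound you get is $K\cdot c\,DR/(2p)>cD/(4n)$. This falls short of $D/(4n)$ however precisely you compute $c$; for instance, $c\approx 6/\pi^2$ gives only about $0.15\,D/n$. To recover $\frac14$ on your route you would need primes in $[x,(1+\eps)x]$ together with a uniform bound $c>\frac12$, and you supply neither. So, as written, the stated constant is not established.

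The missing observation is that the diameter of $A$ never matters. Amplification is therefore unnecessary, and so are the detours through Freiman isomorphisms or avoiding differences. Both $\mathcal H(n)$ and \Cref{thm:kakeya-reduction} allow $|X|\le n$, so $A\bmod p$ need not be injective. All you need is that each progression $c_d+d[K]$ survives reduction modulo $p$. It does as soon as $d\le D<p$ and $K<p$, because then $p\nmid (j-j')d$ for distinct $j,j'\in[K]$; the differences $1,\dots,D$ also remain distinct and nonzero. This is the paper's proof. Take $q=\max\{D,nK\}$ and a prime $q<p<2q$. The reduction $X$ is then $(K,K)$-rich in all $D$ differences, and $mL=nK<p$. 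Applying \Cref{thm:kakeya-reduction} with $L=K$ and $\Delta=\{1,\dots,D\}$ involves no amplification and hence no coprime loss, and gives $\mathcal H(n)\ge KD/(2p)>KD/(4q)=\frac14\min\{K,D/n\}$. Your derivation of the second assertion from the first is then exactly right.
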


\begin{proof}
Let \(A\subseteq\mathbb Z\) have size \(n=F_K(D)\) and contain a \(K\)-term progression of every difference \(1\le d\le D\). Set \(q=\max\{D,nK\}\), and by Bertrand's postulate choose a prime \(p\) with \(q<p<2q\). Let \(X\subseteq\F_p\) be the reduction of \(A\) modulo \(p\). Since \(p>D\) and \(p>K\), every one of the given progressions remains a \(K\)-term progression in \(\F_p\), and the \(D\) differences remain distinct and nonzero. Thus \(X\) has size at most \(n\) and is \((K,K)\)-rich in \(D\) differences. Taking \(m=n\) and \(L=K\), we have \(mL=nK<p\), so \Cref{thm:kakeya-reduction} gives
\[
\mathcal H(n)\ge\frac{KD}{2p}>\frac{KD}{4\max\{D,nK\}}=\frac14\min\left\{K,\frac{D}{n}\right\}.
\]
The second assertion follows because either \(K\le4Cn^\alpha\) or \(D/n\le4Cn^\alpha\).
\end{proof}

Suppose, then, that for some fixed \(0<\eps<1/3\) one could improve Knudsen's bound (for all~$p>n$) to
\[
\mathcal H(n)\le n^{1/3-\eps+o(1)}.
\]
Taking
\[
K=\left\lceil D^{(1-3\eps)/(4-3\eps)}\right\rceil
\]
in \Cref{prop:hashing-implies-kakeya} would give
\[
F_K(D)\ge D^{3/(4-3\eps)-o(1)}.
\]
By contrast, the bound of Gilboa and Pinchasi~\cite{GilboaPinchasi2014} gives at this scale only
\[
F_K(D)\ge D^{1/2+(1-3\eps)/(4-3\eps)-o(1)}.
\]
The improvement in the exponent is
\[
\frac{3}{4-3\eps}-\left(\frac12+\frac{1-3\eps}{4-3\eps}\right)=\frac{9\eps}{8-6\eps}>0.
\]
Thus the exponent \(1/3\) for linear hashing is exactly the threshold at which this reduction begins to improve the best known general bounds for unions of complete integer arithmetic progressions.

We next observe that if there is an upper bound of~$n^{o(1)}$ for linear hashing (for all~$p>n$), then a weaker version of the arithmetic Kakeya conjecture, due to Bourgain, follows.
The \emph{arithmetic} Kakeya conjecture is equivalent to saying that for any fixed~$\varepsilon>0$, there exists a fixed~$k=k(\varepsilon)$ for which~$F_k(N)\geq N^{1-\varepsilon}$ for all sufficiently large~$N$. 
The classical \emph{geometric} Kakeya conjecture is known to follow from it.
Bourgain~\cite{Bourgain1991Dirichlet,Bourgain1993Dirichlet} observed that the geometric Kakeya conjecture already follows from the formally weaker assertion that, for every fixed \(\eta>0\), a set containing \(D^\eta\)-term arithmetic progressions with every difference \(1,\ldots,D\) must have size \(D^{1-o(1)}\); see Green and Ruzsa~\cite[Equation~(1.1)]{GreenRuzsa2019} for this formulation.
Unlike the arithmetic Kakeya conjecture, this condition allows the progression length to grow polynomially with the number of differences. Our equivalence also shows that a \(n^{o(1)}\) upper bound for linear hashing would imply exactly this condition.

\begin{corollary}
\label{cor:subpoly-implies-kakeya}
If \(\mathcal H(n)\le n^{o(1)}\), then for every fixed \(\eta>0\),
\[
F_{\lceil D^\eta\rceil}(D)\ge D^{1-o(1)}.
\]
\end{corollary}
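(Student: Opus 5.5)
The plan is to apply \Cref{prop:hashing-implies-kakeya} with \(K=\lceil D^\eta\rceil\) and then unwind the \(o(1)\) quantifiers carefully.

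Fix \(\eta>0\), and without loss of generality assume \(\eta<1\). Write \(n=F_K(D)\) with \(K=\lceil D^\eta\rceil\). The proposition gives
\[
\mathcal H(n)\ge\frac14\min\left\{K,\frac{D}{n}\right\}.
\]
We want to show \(n\ge D^{1-\eps}\) for every fixed \(\eps>0\) once \(D\) is large. Since \(F_K(D)\) is nondecreasing in \(\eps\)-requirements, it suffices to treat \(\eps<\eta\). Suppose instead that \(n< D^{1-\eps}\) for arbitrarily large \(D\). The trivial lower bound \(n\ge K\ge D^\eta\) guarantees \(n\to\infty\) along such a sequence, so the hypothesis \(\mathcal H(n)\le n^{o(1)}\) applies meaningfully.

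Along this sequence, both terms of the minimum are large:
\[
K\ge D^\eta> n^{\eta/(1-\eps)}\ge n^{\eta},\qquad \frac{D}{n}> \frac{n^{1/(1-\eps)}}{n}=n^{\eps/(1-\eps)}\ge n^{\eps}.
\]
Therefore \(\mathcal H(n)\ge\frac14 n^{\min\{\eta,\eps\}}\) for an unbounded sequence of \(n\). This contradicts \(\mathcal H(n)\le n^{o(1)}\). Hence \(F_{\lceil D^\eta\rceil}(D)\ge D^{1-\eps}\) for all large \(D\), and since \(\eps\) was arbitrary, \(F_{\lceil D^\eta\rceil}(D)\ge D^{1-o(1)}\).

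The only real subtlety is the order of quantifiers. The bound \(\mathcal H(n)\le n^{o(1)}\) is a statement as \(n\to\infty\), so we must check that \(n=F_K(D)\) actually tends to infinity with \(D\); this is the role of \(n\ge K\). We must also make sure that both terms in the minimum are polynomially large in \(n\), not merely in \(D\). The hypotheses of the proposition, \(K,D\ge2\), hold trivially for large \(D\).
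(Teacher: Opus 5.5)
Your proof is correct and follows the paper's route: you apply \Cref{prop:hashing-implies-kakeya} with \(K=\lceil D^\eta\rceil\), note that the \(K\) term is too large to realize the minimum, and conclude that \(D/n\) must be subpolynomial. The only difference is bookkeeping. The paper turns \(\mathcal H(n)\le n^{o(1)}\) into \(D^{o(1)}\) using \(K\le n\le KD\), whereas you argue by contradiction and write both terms as powers of \(n\); this works equally well, and your reductions to \(\eta<1\) and \(\eps<\eta\) are harmless but not needed.
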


\begin{proof}
Let \(K=\lceil D^\eta\rceil\) and \(n=F_K(D)\). Since \(K\le n\le KD\), we have \(\log n=\Theta_\eta(\log D)\), and hence \(\mathcal H(n)=D^{o(1)}\). By \Cref{prop:hashing-implies-kakeya},
\[
\min\left\{K,\frac{D}{n}\right\}\le4\mathcal H(n)=D^{o(1)}.
\]
Since \(K=D^{\eta+o(1)}\), the first term cannot realize the minimum for sufficiently large \(D\). Therefore \(D/n\le D^{o(1)}\), which is equivalent to \(n\ge D^{1-o(1)}\).
\end{proof}

The conclusion of \Cref{cor:subpoly-implies-kakeya} is precisely Bourgain's condition
\[
\liminf_{D\to\infty}\frac{\log F_{\lceil D^\eta\rceil}(D)}{\log D}\ge1
\qquad\text{for every fixed }\eta>0.
\]

The arithmetic Kakeya conjecture of Katz and Tao~\cite{KatzTao2002}, in the equivalent forms discussed by Green and Ruzsa~\cite{GreenRuzsa2019}, predicts a substantially stronger, sub-polynomial, answer for complete integer progressions, and in particular that such improvements exist for true Kakeya sets.
Both the arithmetic Kakeya conjecture and Bourgain's arithmetic-progression criterion concern complete integer progressions. Neither makes any corresponding prediction for arbitrary vanishing-density subsets of cyclic progressions of the kind we show are equivalent to the hashing bound.

We remark that there is evidence from finite vector spaces that the density relaxation can improve the parameter relevant to hashing by a polynomial factor. 
Consider subsets of \(\F_q^2\) of size \(q\), and regard every affine line as having length \(L=q\). 
If \(q\) is a perfect square, the construction of Alon et al.~\cite{AlonEtAl1999} gives a set \(S\subseteq\F_q^2\) of size \(q\) such that, for every line direction, some affine line in that direction intersects \(S\) in at least \(K=\sqrt q\) points. 
Thus all directions are good, and the average density satisfies
\[
        \delta K
        =
        1\cdot{\sqrt q}
        =
        {\sqrt q}.
\]

By contrast, an arbitrary \(q\)-element subset of \(\F_q^2\) can contain a complete line in at most one direction, resulting in $\delta K=O(1)$.
The density relaxation therefore improves precisely the relevant objective by a factor of \(\Omega(\sqrt q)\). We do not know an analogous separation for bounded cyclic progressions in \(\F_p\), which is the setting required for linear hashing.
Nevertheless, this vector-space example shows that such a separation is possible in principle and supports the possibility of obtaining polynomial hashing lower bounds without improving ordinary arithmetic Kakeya constructions.

\section{Lower Bound for Linear Hashing}
\label{sec:linear-lower-bound}

In this section we prove our main lower bound, for linear hashing.
We do so by plugging a construction of Green and Ruzsa~\cite{GreenRuzsa2019} in the reduction of Section~\ref{sec:reduction}.
We begin by presenting and repackaging the Green-Ruzsa construction, as we need to use it with slightly different parameter choices than those they choose in their paper, as well as to explicitly spell out more properties of their construction.

\subsection{Digesting a construction of Green and Ruzsa}

We use the following parameterized form of the construction in Section~5 of Green and Ruzsa~\cite{GreenRuzsa2019}.  
Let $q_1<q_2<\cdots<q_t$ be the first odd primes and write
\[
        Q_t=\prod_{i=1}^t q_i,
        \qquad
        \rho_t=\prod_{i=1}^t\left(1+\frac1{q_i}\right).
\]

\begin{theorem}[Green--Ruzsa construction, repeated in \Cref{app:green-ruzsa}]
\label{thm:GR-black-box}
For all integers $t,K\ge1$, there is a set $S=S(t,K)\subseteq\{1,2,\ldots,KQ_t\}$ with the following properties.
\begin{enumerate}[label=\textup{(\roman*)}]
    \item For every $d\in\{1,2,\ldots,Q_t-1\}$, the set $S$ contains a $K$-term arithmetic progression of difference $d$.
    \item The cardinality of $S$ satisfies
    \[
        |S|\le K^2Q_t2^{-t}\rho_t.
    \]
\end{enumerate}
\end{theorem}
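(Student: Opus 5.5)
The natural route is to reproduce the construction of Section~5 of Green and Ruzsa~\cite{GreenRuzsa2019} through the Chinese remainder theorem. The set $S$ will be cut out by local conditions modulo each of the odd primes $q_1,\ldots,q_t$, each of which keeps roughly half of the residues. The product of the local densities should then give the factor $2^{-t}\rho_t=\prod_i \frac{q_i+1}{2q_i}$. The extra factors of $K$ come from two sources: the length $KQ_t$ of the ambient interval, and a union over $K$ auxiliary shifts or slopes.

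Concretely, I would first fix, for each prime $q=q_i$, a family of ``half-sets'' of $\Z_{q}$. For each nonzero residue $e$, let
\[
B_q(e)=\bigl\{u\in\Z_q:\ \langle eu\rangle_q\le (q-1)/2\bigr\},
\]
so that $|B_q(e)|=(q+1)/2$. The key local fact is as follows. If the difference $d$ satisfies $d\not\equiv 0 \pmod q$, take $e\equiv d^{-1}$. Then along a progression $x+jd$ the quantity $e(x+jd)$ advances by exactly $1$ in each step. A run of consecutive terms therefore stays in a half-interval, as long as it starts at a suitable point and the number of steps is at most about $q/2$. For long progressions, with $K$ larger than $q$, this should be handled by grouping the indices: we only need the residues of the terms to be controlled after reindexing by the global structure. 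This is where the $K$ auxiliary parameters come in.

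The global set I would try is
\[
S=\Bigl\{x\in\{1,\ldots,KQ_t\}:\ \exists s\in[K]\ \forall i,\ \ x-s\,\phi_i \in B_{q_i}(e_i(x))\Bigr\},
\]
with the $e_i$ and the $\phi_i$ read off from the CRT coordinates. The intended effect is that, for each $s$, the contribution to $S$ has density at most $K\prod_i\frac{q_i+1}{2q_i}$ in $[KQ_t]$. Summing over $s\in[K]$ then gives $|S|\le K\cdot K Q_t\cdot 2^{-t}\rho_t$, which is (ii). Once the correct local sets are identified, (ii) is a routine product computation via CRT.

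For (i), given $d<Q_t$, I would choose the starting point $x_d$ by CRT, coordinate by coordinate. For primes $q_i\nmid d$, the starting point is taken in the appropriate half-interval relative to $d^{-1}$. For primes $q_i\mid d$, every term of the progression has the same residue $x_d$ modulo $q_i$, so it suffices to place $x_d$ in the half-set. I would then check that $x_d+jd\in S$ for all $j<K$, using that $x_d+(K-1)d< KQ_t$ after reducing $x_d$ to the range $[Q_t]$.

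I expect the main obstacle to be exactly this verification of (i) for progressions whose length $K$ exceeds the small primes $q_i$. The naive half-interval argument only controls about $q_i/2$ consecutive terms. I would therefore try first to let the shift $s$ absorb the wrap-around: different terms of the progression would be witnessed by different $s\in[K]$, which is precisely why a second factor of $K$ appears in the bound.
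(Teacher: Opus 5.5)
There is a genuine gap, and it is the step you flag yourself: the proposal never exhibits a set satisfying (i), and the half-interval framework cannot supply one. The set $B_{q}(d^{-1})$ is adapted to a single direction $d \bmod q$. The CRT product count in (ii), however, needs each layer to use one fixed residue set modulo $q_i$ that serves all $d<Q_t$ at once. Letting $e_i$ depend on $x$ destroys that count, since $\bigcup_{e\in\Z_q^*}B_q(e)=\Z_q$.

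So take $e_i,\phi_i$ fixed. In the range used later in the paper, $K$ exceeds every $q_i$, so a progression with $q_i\nmid d$ meets every residue modulo $q_i$. Its terms must therefore be spread over several layers, and linear shifts do not achieve this:
\begin{itemize}
\item With $s=j$, the $j$-th term lies in layer $j$ modulo $q_i$ iff $x+j(d-\phi_i)\in B_{q_i}(e_i)$ for all $j<K$. Once $K\ge q_i$, this forces $d\equiv\phi_i\pmod{q_i}$.
\item For a general assignment $j\mapsto s(j)$, each of the $K$ terms needs its own $s\in[K]$ in a CRT box of density $2^{-t}\rho_t$ modulo $Q_t$. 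In the nontrivial range $K<2^t/\rho_t$, where (ii) beats the trivial bound $KQ_t$, an interval of length $K$ misses some translates of such a box. Nothing in your construction controls which translates arise.
\end{itemize}
Your bookkeeping is also off by a factor of $K$: density $K\prod_i\frac{q_i+1}{2q_i}$ per shift, over $K$ shifts, would give $K^3Q_t2^{-t}\rho_t$.

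The missing idea is Green and Ruzsa's quadratic choice of starting points. Take $x_d\in\{1,\ldots,Q_t\}$ with $x_d\equiv d^2\pmod{Q_t}$, and let $S$ be simply the union of the progressions $x_d+d\{0,\ldots,K-1\}$ over $1\le d<Q_t$. Then (i) and $S\subseteq\{1,\ldots,KQ_t\}$ are immediate. The layers are indexed by the position $j$ within the progression, not by auxiliary shifts. Modulo each odd prime $q$, the $j$-th terms $d^2+jd=(d+j/2)^2-j^2/4$ of all the progressions lie in one translate of the set of squares. That set has exactly $(q+1)/2$ elements regardless of $d$. So the halving comes from squaring being two-to-one, and the layer-dependent shift $-j^2/4$ is quadratic in $j$. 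By CRT the $j$-th layer occupies $\prod_i\frac{q_i+1}{2}=Q_t2^{-t}\rho_t$ residues modulo $Q_t$, each with at most $K$ representatives in $\{1,\ldots,KQ_t\}$. Summing over the $K$ layers gives $|S|\le K^2Q_t2^{-t}\rho_t$.
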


For the first $t$ odd primes, standard prime number estimates (e.g.,~\cite{MontgomeryVaughan2007}) give
\[
        \log Q_t=(1+o(1))t\log t,
        \qquad \text{and} \qquad
        \rho_t=e^{o(t)},
\]
as~$\log q_t=(1+o(1))\log t$ and~$q_t=\omega(t)$.

We next set the parameters in a way that would be useful for our lower bounds.

\begin{corollary}
\label{cor:GR-hashing-scale}
Fix constants $0<c<C$ and $0<\eta<\log 2/3$.  
For all sufficiently large~$n$, let
\[
        K_0=\exp\left(
          \left(\frac{\log 2}{3}-\eta\right)
          \frac{\log n}{\log\log n}
        \right).
\]
For every integer $m$ satisfying $cn\le m\le Cn$, there are positive integers $K,D$ and an integer set~$S$ such that
\[
        K\ge \frac{K_0}{256},
        \qquad
        16mK\le D\le\frac{mK_0}{4},
\]
the set~$S$ has size at most~$n$, and, for every $1\le d\le D$, it contains a $K$-term arithmetic progression of difference~$d$.
\end{corollary}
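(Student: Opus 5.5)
We will obtain the corollary from \Cref{thm:GR-black-box} by fixing the number of primes $t$ first and the progression length $K$ second. $D$ will play the role of an upper limit on the differences we use. Since property (i) provides progressions for \emph{all} differences $1\le d\le Q_t-1$, it suffices to arrange $Q_t-1\ge D$ and then use only differences up to $D$. This removes the difficulty that consecutive values $Q_{t-1}<Q_t$ differ by the factor $q_t$, which is not a constant and so cannot be squeezed into a window of constant width. Concretely, set
\[
D=\left\lfloor \frac{mK_0}{4}\right\rfloor,
\]
and let $t$ be the \emph{smallest} integer with $Q_t>D+1$. Minimality gives $Q_{t-1}\le D+1$, hence $Q_t\le q_t(D+1)\le q_t\,CnK_0$.

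Next we pin down the asymptotics of $t$. From $\log Q_t=(1+o(1))t\log t$ together with $\log D\le \log Q_t\le \log D+\log q_t$, and using $\log D=(1+o(1))\log n$ (as $K_0=n^{o(1)}$) and $\log q_t=O(\log t)$, we get $t\log t=(1+o(1))\log n$. Therefore
\[
t=(1+o(1))\frac{\log n}{\log\log n},
\qquad
2^{t}=\exp\Bigl((\log 2-o(1))\frac{\log n}{\log\log n}\Bigr).
\]

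Now choose $K$ to be the largest integer with $K^2Q_t2^{-t}\rho_t\le n$, and then cap it: $K:=\min\{K,\lfloor K_0/64\rfloor\}$. Lowering $K$ only shrinks the bound in (ii) and keeps (i) valid, since $K$ is a free parameter of the construction. With this choice, $|S|\le n$ holds. The cap gives $16mK\le mK_0/4$, and since $16mK$ is an integer, this yields $16mK\le D$. The upper bound $D\le mK_0/4$ holds by definition. The only remaining condition is $K\ge K_0/256$.

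This last inequality is the main step, and it is where the slack $\eta$ is used. If the cap is active there is nothing to show, because $\lfloor K_0/64\rfloor\ge K_0/256$ for large $n$. Otherwise, using $Q_t\le q_tCnK_0$, the uncapped value satisfies
\[
K\ \ge\ \sqrt{\frac{n2^t}{Q_t\rho_t}}-1\ \ge\ \sqrt{\frac{2^t}{Cq_tK_0\rho_t}}-1 .
\]
It therefore suffices to show $2^t\ge K_0^3\cdot C q_t\rho_t\cdot 256^2\cdot 4$, where the factor $4$ absorbs the $-1$. The factors $q_t=e^{O(\log t)}$ and $\rho_t=e^{o(t)}$ are both $\exp\bigl(o(\log n/\log\log n)\bigr)$, and the constants are negligible. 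Hence the required bound reads
\[
(\log 2-o(1))\frac{\log n}{\log\log n}\ \ge\ (\log 2-3\eta+o(1))\frac{\log n}{\log\log n},
\]
which holds for all sufficiently large $n$ because $\eta>0$ is fixed. The integrality and rounding issues ($K\ge1$, floors in $D$ and $K$) are harmless for large $n$ since $K_0\to\infty$.
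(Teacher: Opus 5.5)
Your proof is correct and is essentially the paper's argument: both take $t$ minimal with $Q_t$ exceeding $D$, use minimality to get $Q_t\le q_t D$ (up to your harmless $+1$), derive $t=(1+o(1))\log n/\log\log n$, and finish with the comparison $3\log K_0 < t\log 2 - o(t)$, which is where $\eta$ enters. The only difference is bookkeeping. The paper fixes $K=\lfloor K_0/128\rfloor$ and $D=16mK$ and then checks $|S|<n$ directly. You instead fix $D=\lfloor mK_0/4\rfloor$ and take $K$ maximal subject to the size bound, capped at $\lfloor K_0/64\rfloor$; for large $n$ that cap is always the active constraint, so your case analysis reduces to the same verification.
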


\begin{proof}
Put
\[
        K=\left\lfloor\frac{K_0}{128}\right\rfloor,
        \qquad
        D=\lceil16mK\rceil.
\]
For sufficiently large~$n$, we have $K\ge K_0/256$ and $16mK\le D\le32mK\le mK_0/4$.

Let $t$ be the minimal integer so that $Q_t>D$, and take the set~$S(t,K)$ from \Cref{thm:GR-black-box}.  
By minimality, $Q_t\le Dq_t$ as~$\frac{Q_t}{q_t}=Q_{t-1}\leq D$.  
Since $D=n^{1+o(1)}$, the prime number estimates above imply
\[
        t=(1+o(1))\frac{\log n}{\log\log n}.
\]
The size estimate in \Cref{thm:GR-black-box} gives
\begin{align*}
        |S|
        &\le K^2Q_t2^{-t}\rho_t\\
        &\le K^2Dq_t2^{-t}\rho_t\\
        &\le 32mK^3q_t2^{-t}\rho_t.
\end{align*}
As $m=\Theta(n)$ and $q_t\rho_t=e^{o(t)}$, it follows that
\begin{align*}
        \log\frac{|S|}{n}
        &\le 3\log K-t\log 2+o(t)\\
        &\le
        \left(-3\eta+o(1)\right)
        \frac{\log n}{\log\log n}.
\end{align*}
Thus $|S|<n$ for all sufficiently large~$n$.  
Finally, $Q_t>D$, so the progression property in \Cref{thm:GR-black-box} gives the required progressions for every difference at most~$D$.
\end{proof}

\subsection{Hashing lower bound}

We now prove the main lower bound.  

\begin{theorem}[Lower bound for linear hashing]
\label{thm:linear-main}
Fix constants $0<c<C$ and $0<\eta<\log2/3$.  
For all sufficiently large~$n$, let
\[
        K_0=\exp\left(
          \left(\frac{\log 2}{3}-\eta\right)
          \frac{\log n}{\log\log n}
        \right).
\]
For every integer $m$ satisfying $cn\le m\le Cn$ and every prime $p\ge mK_0$, 
there is a set $X\subseteq\F_p$ of size~$n$ such that
\[
        \E_{a,b}\big[\ML(h^{p,m}_{a,b},X)\big]=\Omega(K_0).
\]
\end{theorem}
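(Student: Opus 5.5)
We combine \Cref{cor:GR-hashing-scale}, \Cref{lem:difference-amplification}, and \Cref{thm:kakeya-reduction}. First, apply \Cref{cor:GR-hashing-scale} with the given $c,C,\eta$ and $m$. This yields integers $K\ge K_0/256$ and $D$ with $16mK\le D\le mK_0/4$, and an integer set $S$ with $|S|\le n$ containing a $K$-term progression of every difference $1\le d\le D$. All elements of $S$ lie in $\{1,\dots,KQ_t\}$, so we may reduce $S$ modulo $p$. Progressions of integer difference $d\le D<p$ map to progressions of the same (nonzero) difference in $\F_p$. Their $K$ terms stay distinct because $K<p$, though the reduction need not be injective on all of $S$. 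Let $X_0\subseteq\F_p$ be the image, so $|X_0|\le n$. Pad $X_0$ with arbitrary extra elements to obtain $X$ of size exactly $n$; this is possible since $p\ge mK_0>n$. Richness is monotone under supersets, so padding does not hurt.

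Next, apply \Cref{lem:difference-amplification} with parameter $R$, chosen so that $DR<p$ and $m\cdot RK\le p$. The natural choice is
\[
R=\left\lfloor \frac{p}{mK}\cdot\frac1{16}\right\rfloor .
\]
Then $mRK\le p/16\le p$. Also $DR\le 32mK\cdot p/(16mK)=2p$. This may fail $DR<p$ by a constant, so I would shrink the constant and take $R=\lfloor p/(64mK)\rfloor$. This gives $DR\le 32mK\cdot p/(64mK)=p/2<p$. We also need $R\ge1$, and indeed $R\gtrsim p/(64 m K_0/128)\ge 2$, using $K\le K_0/128$ and $p\ge mK_0$. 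Since $R\ge 2$, the floor loses at most a constant factor, so $R=\Theta(p/(mK))$. The lemma then says $X$ is $(RK,K)$-rich in a set $\Delta$ of size $\Omega(DR)$. Using $D\ge 16mK$, we get $|\Delta|=\Omega(16mK\cdot p/(mK))=\Omega(p)$, so $\delta=\Omega(1)$.

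Finally, apply \Cref{thm:kakeya-reduction} with $L=RK$. Its hypothesis $mL\le p$ holds by the choice of $R$. It gives
\[
\E_{a,b}[\ML]\ge \frac{K|\Delta|}{2p}=\Omega(K)=\Omega(K_0).
\]

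The only delicate point is the constant bookkeeping in choosing $R$. We need both $mRK\le p$ and $DR<p$ while keeping $DR=\Omega(p)$. This works precisely because \Cref{cor:GR-hashing-scale} pins $D$ between $16mK$ and $32mK$, so $D=\Theta(mK)$ and one choice of $R$ satisfies all three. We also need $R\ge1$, which uses $p\ge mK_0$ together with $K\le K_0/128$.
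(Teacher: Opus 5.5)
Your proposal is correct and follows the paper's proof: reduce the Green--Ruzsa set modulo $p$, amplify differences via \Cref{lem:difference-amplification}, then apply \Cref{thm:kakeya-reduction}. The only differences are cosmetic: you normalize $R$ by $mK$ instead of the paper's $R=\lfloor p/(2D)\rfloor$, which is equivalent since $D=\Theta(mK)$, and your bounds $D\le 32mK$ and $K\le K_0/128$ come from the proof of \Cref{cor:GR-hashing-scale} rather than its statement, but the stated bounds $D\le mK_0/4\le 64mK$ and $K\le K_0/64$ still give $R\ge 1$, $DR=\Omega(p)$, and $DR\le p$, with strict inequality because $p$ is prime and $1<D<p$.
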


\begin{proof}
Let $K,D,S$ be supplied by \Cref{cor:GR-hashing-scale}.  
Since $p\ge mK_0$ and $D\le mK_0/4$, we have $p\ge4D$.

Reduce $S$ modulo~$p$, and call the resulting set $X_0\subseteq\F_p$.
Reduction can only decrease its size.  
It preserves a $K$-term cyclic progression of every integer difference $1\le d\le D$: 
we have $D<p$ and $K<p$, so the terms within each such progression remain distinct.

Set
\[
        R=\left\lfloor\frac{p}{2D}\right\rfloor.
\]
It follows that
\[
        \frac p4\le DR\le\frac p2.
\]
By \Cref{lem:difference-amplification}, $X_0$ is $(RK,K)$-rich in $\Omega(DR)=\Omega(p)$ differences.  
Moreover,
\[
        mRK
        \le \frac{mKp}{2D}
        \le \frac p{32}.
\]
Pad $X_0$ with arbitrary elements to a set $X\subseteq\F_p$ of exactly~$n$ keys.  
Theorem~\ref{thm:kakeya-reduction} now gives
\[
        \E_{a,b}\big[\ML(h^{p,m}_{a,b},X)\big]
        =\Omega(K)=\Omega(K_0),
\]
as required.
\end{proof}

Taking $\eta$ to zero sufficiently slowly gives the form stated in the introduction.
We further remark that the proof of \Cref{thm:linear-main} is pointwise in the additive shift~$b$: every good multiplier has load at least $K/2$ for every~$b$.  
Thus the same lower bound holds if $b$ is fixed arbitrarily or drawn from any distribution.
In particular, the lower bound holds also for the variant of linear hashing in which we set~$b=0$ deterministically.

\section{Lower Bound for Multiply-Shift Hashing}
\label{sec:multiply-shift}

We next extend the lower bound to the multiply-shift family of Dietzfelbinger, Hagerup, Katajainen, and Penttonen~\cite{DietzfelbingerEtAl1997}.  
Let $q=2^w$ and $m=2^\ell$, with $w>\ell$, and put
\[
        B=\frac qm=2^{w-\ell}.
\]
For an odd multiplier $a\in\Z_q$, define
\[
        \ms^{q,m}_{a}(x)
        =\left\lfloor
          \frac{\langle ax\rangle_q}{B}
        \right\rfloor.
\]
The multiply-shift family takes a uniform odd~$a$.

The first modular operation is very similar to that of linear hashing.
The final operation though, replacing the $\bmod\; m$ reduction, now instead takes the $\ell$ most significant bits of the $w$-bit product. We call this operation \emph{high-bit projection}.
Its equivalence classes are the $m$ consecutive intervals of length~$B$.
We next adapt the lower bound to that different reduction step.

\begin{claim}[Short cyclic arcs are heavy]
\label{lem:consecutive-multiply-shift}
Let $Y\subseteq\Z_q$ consist of $K$ distinct points contained in a cyclic interval of length at most~$B$.  
Then, at least $K/2$ elements of~$Y$ have the same high-bit projection.
\end{claim}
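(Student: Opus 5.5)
The plan mirrors the splitting argument in the proof of \Cref{thm:kakeya-reduction}, with the ``crossing a multiple of $p$'' step replaced by ``crossing a bucket boundary.'' Write the cyclic interval containing $Y$ as $\{s,s+1,\ldots,s+B-1\}\bmod q$ for some $s\in[0,q-1]$. Its representatives $\langle y\rangle_q$ form an integer interval of length at most $B$, except that the interval may wrap around past $q-1$ to $0$.

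First, suppose there is no wrap, so all representatives lie in $[s,s+B-1]\subseteq[0,q-1]$. The high-bit classes are the consecutive blocks $[iB,(i+1)B-1]$ for $i\in[m]$. An integer interval of length at most $B$ meets at most two consecutive blocks, since it contains at most one multiple of $B$ in its interior, namely the first block boundary $\lceil s/B\rceil B$. Therefore $Y$ splits into at most two parts, each with a constant high-bit projection.

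Second, suppose there is a wrap. Then the representatives lie in $[s,q-1]\cup[0,s+B-1-q]$. The first piece has length less than $B$ and ends at $q-1$, which is the top of the last block $[q-B,q-1]$ because $q=mB$. It therefore starts after $q-1-B$, so it lies entirely inside block $m-1$. Symmetrically, the second piece lies in $[0,B-1]$, which is block $0$. Again $Y$ is covered by at most two classes.

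In either case, pigeonhole gives a class containing at least $\lceil K/2\rceil\ge K/2$ points of $Y$.

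The only point needing care is the wrap-around case. The key fact there is that $B$ divides $q$, so the wrap point $0\equiv q$ is itself a block boundary. This is exactly why the two split points coincide, leaving two classes rather than three. I would state this explicitly, for instance by viewing the blocks as a partition of the cycle $\Z_q$ into $m$ arcs of length $B$. Any arc of length at most $B$ meets at most two of these arcs, since meeting three would require containing a full middle arc plus two further points, which gives length at least $B+2$. This cyclic phrasing handles both cases uniformly, and I would likely present it that way.
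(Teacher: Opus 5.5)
Your proof is correct and follows the same approach as the paper. The paper's proof is a two-line version of your cyclic phrasing: a cyclic interval of length at most $B$ meets at most two of the $B$-element buckets, including when it wraps from $q-1$ to $0$, and pigeonhole gives a bucket with at least $K/2$ points; your case analysis, with the observation that $B\mid q$ makes $0$ a bucket boundary, simply spells out the wrap-around step explicitly.
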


\begin{proof}
A cyclic interval of length at most $B$ meets at most two of the $B$-element buckets, including when it crosses from $q-1$ to $0$. One of the two buckets contains at least half of its elements.
\end{proof}

\begin{theorem}[Lower bound for multiply-shift]
\label{thm:multiply-shift-main}
Fix constants $0<c<C$ and $0<\eta<\log2/3$.  
For all sufficiently large~$n$, let
\[
        K_0=\exp\left(
          \left(\frac{\log2}{3}-\eta\right)
          \frac{\log n}{\log\log n}
        \right).
\]
For every power of two $m$ satisfying $cn\le m\le Cn$ and every power of two $q\ge mK_0$, 
there is a set $X\subseteq\Z_q$ of size~$n$ such that, 
for a uniform odd multiplier~$a$,
\[
        \E_a\big[\ML(\ms^{q,m}_{a},X)\big]=\Omega(K_0).
\]
\end{theorem}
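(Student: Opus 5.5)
We follow the proof of \Cref{thm:linear-main} almost verbatim, replacing the modulus $p$ by $q=2^w$ and the mod-$m$ bucketing by high-bit projection. Take $K,D,S$ from \Cref{cor:GR-hashing-scale}. Reduce $S$ modulo $q$ to obtain $X_0\subseteq\Z_q$. Since $D\le mK_0/4\le q/4$ and $K<q$, every $K$-term progression of difference $d\le D$ remains a $K$-term cyclic progression with distinct terms. Then pad $X_0$ to exactly $n$ keys.

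The main obstacle is that $\Z_q$ is not a field, so the multiplier attached to a difference must be odd. We therefore modify the amplification of \Cref{lem:difference-amplification}. Instead of a multiplier $md^{-1}$ we aim for $a$ with $a\cdot d\equiv rB' \pmod q$, where the image $c_d a + a d[K]$ becomes a progression of small step. Concretely, restrict to pairs $(d,r)$ with $d$ odd, $1\le d\le D$, $1\le r\le R$, $\gcd(d,r)=1$, and $r$ odd. Set $a_{d,r}=r d^{-1}\pmod q$, which is an odd unit. Then the progression $c_d+d[K]\subseteq X_0$ maps under $x\mapsto a_{d,r}x$ to $a_{d,r}c_d + r[K]$, a set of $K$ distinct points in a cyclic interval of length at most $rK\le RK$.

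Choose $R=\lfloor B/K\rfloor$, so that $RK\le B$. \Cref{lem:consecutive-multiply-shift} then shows that at least $K/2$ of these points share a high-bit projection, i.e.\ $\ML(\ms^{q,m}_{a_{d,r}},X)\ge K/2$.

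Distinctness follows exactly as in \Cref{lem:difference-amplification}. If $r d^{-1}\equiv r'd'^{-1}$, then $rd'\equiv r'd\pmod q$, and both sides are at most $DR\le DB/K$. We need $DR<q$; since $D\le 32mK$, we get $DR\le 32mB=32q$. This is slightly too large, so we take $R=\lfloor B/(64K)\rfloor$ to ensure $DR<q/2$; this still gives $RK\le B$. Equality then holds over $\Z$, and reduced fractions give $(d,r)=(d',r')$.

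For the count, the number of coprime pairs with both coordinates odd is still a constant fraction of $DR/4$. The same union bound over common odd divisors $k\ge3$ gives at least $(1/4)(1-\sum_{k\ge3\text{ odd}}k^{-2})DR=\Omega(DR)$. Here we need $R\ge1$, i.e.\ $B\ge 64K$, which follows from $q\ge mK_0$ and $K\le K_0/128$; at worst we need a constant-factor adjustment, which could be absorbed by taking $K$ slightly smaller in \Cref{cor:GR-hashing-scale}. Also $D\ge16mK$ gives $DR=\Omega(mB)=\Omega(q)$. As there are $q/2$ odd multipliers, a constant fraction of them have load at least $K/2$. Hence $\E_a[\ML]=\Omega(K)=\Omega(K_0)$.

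The step needing the most care is the arithmetic balancing. We must simultaneously ensure $RK\le B$, $DR<q$ (for injectivity), and $DR=\Omega(q)$. These fit together precisely because \Cref{cor:GR-hashing-scale} pins $D=\Theta(mK)$, so $DR=\Theta(mB)=\Theta(q)$ with constants adjustable by the choice of $R$.
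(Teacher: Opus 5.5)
Your proposal is correct and follows the paper's proof essentially step for step: odd coprime pairs $(d,r)$, multipliers $a_{d,r}=rd^{-1}\pmod q$, distinctness via $DR<q$ and uniqueness of reduced fractions, and \Cref{lem:consecutive-multiply-shift} applied to the image $a_{d,r}c_d+r[K]$. The only difference is that you choose $R=\lfloor B/(64K)\rfloor$ in terms of the bucket width rather than the paper's $R=\lfloor q/(2D)\rfloor$, which is equivalent up to constants since $D=\Theta(mK)$. One small inaccuracy, which does not affect the argument: your justification ``$D\le q/4$ and $K<q$'' for injectivity after reduction is only valid for a prime modulus and can fail for even $d$ modulo $2^w$, but you only use odd $d$, which are units.
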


\begin{proof}
Let $K,D,S$ be supplied by \Cref{cor:GR-hashing-scale}.  
Since $q\ge mK_0$ and $D\le mK_0/4$, we have $q\ge4D$.  
Reduce $S$ modulo~$q$ and pad it to a set $X\subseteq\Z_q$ of size~$n$.  
The progressions of odd difference remain injective after reduction because every odd residue is invertible modulo~$q$ and $K<q$.

Let
\[
        R=\left\lfloor\frac{q}{2D}\right\rfloor.
\]
As before, $q\ge4D$ and $q/4\le DR\le q/2$.  
We next recover a variant of~\cref{lem:difference-amplification} that carries despite not working modulo a prime.
For every coprime pair of odd integers $d\le D$ and $r\le R$, define the \emph{admissible} odd multiplier
\[
        a_{d,r}=rd^{-1}\pmod q.
\]
These multipliers are distinct.  
Equality of $a_{d,r}$ and $a_{d',r'}$ is equivalent to
\[
        rd'\equiv r'd\pmod q.
\]
Both sides are at most $DR<q$, so equality holds over the integers, and uniqueness of reduced fractions gives $(d,r)=(d',r')$.  
A positive proportion of all pairs in $[D]\times[R]$ are both odd and coprime.  
Consequently the construction supplies $\Omega(DR)=\Omega(q)$ distinct good odd multipliers.

Fix one such pair, and let $x_d+d[K]$ be the corresponding progression.
Multiplication by $a_{d,r}$ sends it to the $K$ points
\[
        a_{d,r}x_d+r[K]\pmod q.
\]
They lie in a cyclic interval of length at most $RK$, and
\[
        RK
        \le\frac{Kq}{2D}
        \le\frac{q}{32m}
        <B.
\]
By \Cref{lem:consecutive-multiply-shift}, the maximum load is thus at least $K/2$.  
Since a constant fraction of the $q/2$ odd multipliers are good, averaging gives $\Omega(K)=\Omega(K_0)$.  
\end{proof}

\section{Discussion and Open Problems}
\label{sec:discussion}

We proved that the worst-case expected maximum load of both linear hashing and multiply-shift hashing is super-polylogarithmic. Together with Knudsen's upper bound~\cite{Knudsen2019}, the remaining gap is
\[
        n^{\Omega(1/\log\log n)}
        \qquad\text{versus}\qquad
        \widetilde{O}(n^{1/3}).
\]
In particular, the main remaining qualitative question is whether the maximum load is polynomial in~$n$.

\begin{openproblem}
Is there an absolute constant~$c>0$ such that the worst-case expected maximum load of linear hashing is at least~$n^c$? Does the same hold for multiply-shift hashing?
\end{openproblem}

Equivalently, up to logarithmic factors, does there exist a bounded cyclic density arithmetic Kakeya set for which the product between the fraction of good differences and the richness in each good difference is~$n^{\Omega(1)}$?

It is tempting to try to close the gap from above. Our equivalence reveals, however, that such an improvement would have consequences beyond hashing. As discussed in \Cref{sec:kakeya-comparison}, an upper bound of~$O(n^{1/3-\eps})$ holding throughout the relevant parameter range for linear hashing would also improve the best known general bound for ordinary integer sets containing complete arithmetic progressions with many different differences~\cite{GilboaPinchasi2014}. Thus, even after removing both the density relaxation and the modular setting, improving the cube-root exponent would yield new progress on standard arithmetic Kakeya bounds. 
More strongly, an \(n^{o(1)}\) upper bound would imply Bourgain's arithmetic-progression criterion for the geometric (upper-Minkowski) Kakeya conjecture.
This may help explain why the hashing upper bound has resisted improvement. The other possibility is that the cube-root bound is close to the truth. Our lower bound makes polynomial behavior appear plausible.

The situation for lower bounds is quite different. A better lower bound for hashing requires only a better construction for the density variant. Such a construction need not contain any long arithmetic progressions and therefore need not improve the known constructions for ordinary arithmetic Kakeya. There is consequently no analogous arithmetic-Kakeya barrier to obtaining a polynomial hashing lower bound. The vector-space example in \Cref{sec:kakeya-comparison} gives evidence that this distinction can be substantial.
In particular, it is entirely consistent that the arithmetic Kakeya conjecture is true while the maximum load of linear hashing is~$n^{\Omega(1)}$. These two conclusions would say that complete arithmetic progressions cannot overlap too efficiently, but arbitrary rich subsets of such progressions can. Establishing such a separation in the bounded arithmetic setting is therefore a natural problem on its own.

This separation may also be informative for the ordinary arithmetic Kakeya problem. If the full and density variants indeed have substantially different answers, then any proof of the arithmetic Kakeya conjecture must, at least implicitly, distinguish a complete progression from an arbitrary collection of many points on the same progression. 
Understanding where this distinction first becomes usable may therefore provide both a route to stronger hashing lower bounds and a useful guide for the classical arithmetic Kakeya problem.

\bibliography{linear_hashing_refs}

@article{AlonEtAl1999,
  author  = {Noga Alon and Martin Dietzfelbinger and Peter Bro Miltersen and Erez Petrank and G{\'a}bor Tardos},
  title   = {Linear Hash Functions},
  journal = {Journal of the ACM},
  volume  = {46},
  number  = {5},
  pages   = {667--683},
  year    = {1999},
  doi     = {10.1145/324133.324179}
}

@article{CarterWegman1979,
  author  = {J. Lawrence Carter and Mark N. Wegman},
  title   = {Universal Classes of Hash Functions},
  journal = {Journal of Computer and System Sciences},
  volume  = {18},
  number  = {2},
  pages   = {143--154},
  year    = {1979},
  doi     = {10.1016/0022-0000(79)90044-8}
}

@article{DietzfelbingerEtAl1997,
  author  = {Martin Dietzfelbinger and Torben Hagerup and Jyrki Katajainen and Martti Penttonen},
  title   = {A Reliable Randomized Algorithm for the Closest-Pair Problem},
  journal = {Journal of Algorithms},
  volume  = {25},
  number  = {1},
  pages   = {19--51},
  year    = {1997},
  doi     = {10.1006/jagm.1997.0873}
}

@article{GilboaPinchasi2014,
  author  = {Shoni Gilboa and Rom Pinchasi},
  title   = {On the Union of Arithmetic Progressions},
  journal = {SIAM Journal on Discrete Mathematics},
  volume  = {28},
  number  = {3},
  pages   = {1062--1073},
  year    = {2014},
  doi     = {10.1137/130941122}
}

@article{GreenRuzsa2019,
  author  = {Ben Green and Imre Z. Ruzsa},
  title   = {On the Arithmetic {Kakeya} Conjecture of {Katz} and {Tao}},
  journal = {Periodica Mathematica Hungarica},
  volume  = {78},
  pages   = {135--151},
  year    = {2019},
  doi     = {10.1007/s10998-018-0270-z}
}

@inproceedings{JaberKumarZuckerman2025,
  author    = {Michael Jaber and Vinayak M. Kumar and David Zuckerman},
  title     = {Linear Hashing Is Optimal},
  booktitle = {Proceedings of the 57th Annual ACM Symposium on Theory of Computing},
  pages     = {245--255},
  year      = {2025},
  doi       = {10.1145/3717823.3718208}
}

@article{Knudsen2019,
  author  = {Mathias B{\ae}k Tejs Knudsen},
  title   = {Linear Hashing Is Awesome},
  journal = {SIAM Journal on Computing},
  volume  = {48},
  number  = {2},
  pages   = {736--741},
  year    = {2019},
  doi     = {10.1137/17M1126801}
}

@misc{Westover2024,
  author       = {Alek Westover},
  title        = {On the Relationship Between Several Variants of the Linear Hashing Conjecture},
  year         = {2024},
  howpublished = {arXiv:2307.13016}
}

@article{dhar2024linear,
  title={Linear Hashing with $\ell_\infty$ guarantees and two-sided Kakeya bounds},
  author={Dhar, Manik and Dvir, Zeev},
  journal={TheoretiCS},
  volume={3},
  year={2024},
  publisher={Episciences. org}
}

@article{Thorup2015HighSpeedHashing,
  author        = {Mikkel Thorup},
  title         = {High Speed Hashing for Integers and Strings},
  journal       = {arXiv preprint arXiv:1504.06804},
  year          = {2015},
  eprint        = {1504.06804},
  archivePrefix = {arXiv},
  primaryClass  = {cs.DS},
  doi           = {10.48550/arXiv.1504.06804},
  url           = {https://arxiv.org/abs/1504.06804}
}

@article{RichterAlvarezDittrich2015,
  author  = {Stefan Richter and Victor Alvarez and Jens Dittrich},
  title   = {A Seven-Dimensional Analysis of Hashing Methods and Its
             Implications on Query Processing},
  journal = {Proceedings of the VLDB Endowment},
  volume  = {9},
  number  = {3},
  pages   = {96--107},
  year    = {2015},
  doi     = {10.14778/2850583.2850585},
  url     = {https://www.vldb.org/pvldb/vol9/p96-richter.pdf}
}

@article{LuanChang2022,
  author  = {Hua Luan and Lei Chang},
  title   = {An Experimental Study of Group-By and Aggregation on
             {CPU--GPU} Processors},
  journal = {Journal of Engineering and Applied Science},
  volume  = {69},
  number  = {1},
  pages   = {54},
  year    = {2022},
  doi     = {10.1186/s44147-022-00108-1},
  url     = {https://doi.org/10.1186/s44147-022-00108-1} }

@misc{Babka2018,
  author        = {Martin Babka},
  title         = {A Note on the Size of Largest Bins Using Placement
                   with Linear Transformations},
  year          = {2018},
  eprint        = {1810.04161},
  archivePrefix = {arXiv},
  primaryClass  = {cs.DS},
  doi           = {10.48550/arXiv.1810.04161},
  url           = {https://arxiv.org/abs/1810.04161}
}

@article{DharDvirLund2021,
  author  = {Manik Dhar and Zeev Dvir and Ben Lund},
  title   = {Simple Proofs for Furstenberg Sets over Finite Fields},
  journal = {Discrete Analysis},
  volume  = {2021},
  eid     = {22},
  year    = {2021},
  doi     = {10.19086/da.29067},
  url     = {https://doi.org/10.19086/da.29067}
}

@book{MontgomeryVaughan2007,
  author    = {Hugh L. Montgomery and Robert C. Vaughan},
  title     = {Multiplicative Number Theory I: Classical Theory},
  series    = {Cambridge Studies in Advanced Mathematics},
  volume    = {97},
  publisher = {Cambridge University Press},
  year      = {2007}
}

@article{KatzTao2002,
  author  = {Nets Hawk Katz and Terence Tao},
  title   = {New Bounds for {Kakeya} Problems},
  journal = {Journal d'Analyse Mathématique},
  volume  = {87},
  pages   = {231--263},
  year    = {2002},
  doi     = {10.1007/BF02868476}
}

@incollection{Bourgain1991Dirichlet,
  author    = {Jean Bourgain},
  title     = {Remarks on {Montgomery}'s Conjectures on {Dirichlet} Sums},
  booktitle = {Geometric Aspects of Functional Analysis: Israel Seminar (GAFA) 1989--90},
  editor    = {Joram Lindenstrauss and Vitali D. Milman},
  series    = {Lecture Notes in Mathematics},
  volume    = {1469},
  pages     = {153--165},
  publisher = {Springer},
  address   = {Berlin},
  year      = {1991}
}

@article{Bourgain1993Dirichlet,
  author  = {Jean Bourgain},
  title   = {On the Distribution of {Dirichlet} Sums},
  journal = {Journal d'Analyse Mathématique},
  volume  = {60},
  pages   = {21--32},
  year    = {1993},
  doi     = {10.1007/BF03341964}
}

@misc{bakshi2026lowerboundslinearhashing,
      title={Lower Bounds for Linear Hashing via Arithmetic Kakeya}, 
      author={Ainesh Bakshi and Alex Conway and Hanna Komlós and William Kuszmaul and Alek Westover},
      year={2026},
      eprint={2608.24866},
      archivePrefix={arXiv},
      primaryClass={cs.DS},
      url={https://arxiv.org/abs/2608.24866}, 
}
\bibliographystyle{alpha}

\appendix

\section{The Green--Ruzsa Construction}
\label{app:green-ruzsa}

We give the full proof of \Cref{thm:GR-black-box}.  This is the construction
of Green and Ruzsa~\cite[Section~5]{GreenRuzsa2019}, with the number of primes
$t$ left as a free parameter. 

Let $q_1<q_2<\cdots<q_t$ be the first distinct odd primes and denote the primorial
\[
        Q=\prod_{i=1}^t q_i.
\]
For every $d\in\{0,1,\ldots,Q-1\}$, let $x_d$ be the unique integer in $\{1,2,\ldots,Q\}$ such that
\[
        x_d\equiv d^2\pmod Q.
\]
Define
\[
        S
        :=\bigcup_{d=1}^{Q-1}
          \{x_d,x_d+d,\ldots,x_d+(K-1)d\}.
\]
For every $1\le d<Q$, the corresponding set in the union is a $K$-term arithmetic progression of difference $d$.  
Moreover,
\[
        1\le x_d+jd
        \le Q+(K-1)(Q-1)
        \le KQ,
\]
so $S\subseteq\{1,\ldots,KQ\}$.  
This proves the progression and diameter parts of the theorem.

It remains to bound the size of the union.  
For each $j\in\{0,1,\ldots,K-1\}$, define the $j$-th layer
\[
        S_j:=\{x_d+jd:0\le d<Q\}.
\]
Modulo $Q$, this layer is the image of
\[
        f_j(d)=d^2+jd.
\]

Fix one of the odd primes $q_i$.  
Since $2$ is invertible modulo $q_i$, we may complete the square:
\[
        d^2+jd
        =\left(d+\frac j2\right)^2-\frac{j^2}{4}
        \pmod{q_i}.
\]
Translation of the input and output does not change image size, and the square map on $\F_{q_i}$ has one value coming from $0$ and one value for each pair $\{z,-z\}$ of nonzero inputs.  
It therefore has exactly
\[
        \frac{q_i+1}{2}
\]
values.  
Hence the image of $f_j$ modulo $q_i$ has this same cardinality, independently of $j$.

The Chinese remainder theorem identifies $\Z/Q\Z$ with $\prod_i\F_{q_i}$, and the map $f_j$ acts coordinate-wise.
It follows that the image of $f_j$ modulo $Q$ has exactly
\begin{align*}
        \prod_{i=1}^t\frac{q_i+1}{2}
        &=Q2^{-t}\prod_{i=1}^t\left(1+\frac1{q_i}\right)\\
        &=Q2^{-t}\rho_t
\end{align*}
residues.  

Finally, $S_j\subseteq[1,KQ]$, and thus every residue modulo $Q$ has at most $K$ representatives in this interval.  
Combining this with the layer count gives
\[
        |S_j|\le KQ2^{-t}\rho_t.
\]
Summing over the $K$ layers yields
\[
        |S|
        \le\sum_{j=0}^{K-1}|S_j|
        \le K^2Q2^{-t}\rho_t,
\]
which is the required size bound.

\end{document}